\title{On Clustering with Discounts}
\author{Shichuan Deng\thanks{IIIS, Tsinghua University, Email: \texttt{dsc15@mails.tsinghua.edu.cn}
}}
\date{}
\DeclareMathOperator*{\argmin}{arg\,min}
\newcommand{\ceil}[1]{\left\lceil #1\right\rceil}
\newcommand{\etal}{\textit{et~al.}\xspace}
\newcommand{\etalcite}[1]{\textit{et~al.}~\cite{#1}}
\newcommand{\calI}{\mathcal{I}}
\newcommand{\calL}{\mathcal{L}}
\newcommand{\calC}{\mathcal{C}}
\newcommand{\calE}{\mathcal{E}}
\newcommand{\calF}{\mathcal{F}}
\newcommand{\calM}{\mathcal{M}}
\newcommand{\scr}[1]{\mathscr{#1}}
\newcommand{\E}{\mathbb{E}}
\newcommand{\baring}[1]{\overline{#1}}
\newcommand{\atau}{\alpha_\tau}
\newcommand{\btau}{\beta_\tau}
\newcommand{\opt}{\mathsf{OPT}}
\newcommand{\est}{\mathsf{EST}}
\newcommand{\fav}{f_{\mathsf{av}}}
\newcommand{\regret}{{\small\textsc{Re}}\xspace}
\newcommand{\optimal}{{\small\textsc{Opt}}\xspace}
\newcommand{\solution}{{\small\textsc{Sol}}\xspace}
\newcommand{\minregret}{{\small\textsc{Mr}}\xspace}
\newcommand{\minregretsol}{{\small\textsc{Mrs}}\xspace}
\newcommand{\defeq}{\stackrel{\text{def}}{=}}
\newcommand{\kmd}{{\small$k$\textsc{MedDis}}\xspace}
\newcommand{\matmd}{{\small\textsc{MatMedDis}}\xspace}
\newcommand{\knapmd}{{\small\textsc{KnapMedDis}}\xspace}
\newcommand{\unikmd}{{\small\textsc{Uni-}$k$\textsc{MedDis}}\xspace}
\newcommand{\unkcen}{{\small\textsc{USto}$k$\textsc{Cen}}\xspace}
\newcommand{\unmatcen}{{\small\textsc{UStoMatCen}}\xspace}
\newcommand{\unknapcen}{{\small\textsc{UStoKnapCen}}\xspace}
\newtheorem{theorem}{Theorem}
\newtheorem{lemma}[theorem]{Lemma}
\newtheorem{claim}[theorem]{Claim}
\theoremstyle{definition}
\newtheorem{definition}[theorem]{Definition}
\theoremstyle{remark}
\newtheorem*{remark*}{Remark}
\newtheorem*{note*}{Note}
\begin{document}

\maketitle

\begin{abstract}
    We study the $k$-median with discounts problem, wherein we are given clients with non-negative discounts and seek to open at most $k$ facilities. The goal is to minimize the sum of distances from each client to its nearest open facility which is discounted by its own discount value, with minimum contribution being zero. $k$-median with discounts unifies many classic clustering problems, e.g., $k$-center, $k$-median, $k$-facility $l$-centrum, etc.
    We obtain a bi-criteria constant-factor approximation using an iterative LP rounding algorithm. Our result improves the previously best approximation guarantee for $k$-median with discounts [Ganesh~\etal, ICALP'21].
    We also devise bi-criteria constant-factor approximation algorithms for the matroid and knapsack versions of median clustering with discounts.
\end{abstract}

\section{Introduction}

In this paper, we consider the $k$-median with discounts problem. Formally, in an instance of $k$-median with discounts (\kmd), given clients $\calC$, facilities $\calF$, a finite metric $\{c_{ij}\geq0:i,j\in\calC\cup\calF\}$ and discounts $\{r_j\geq0:j\in\calC\}$, our goal is to choose at most $k$ facilities $S\subseteq\calF$ such that the sum of discounted distances from each client to its nearest facility in $S$ is minimized, i.e., to minimize $\sum_{j\in\calC}(c_{jS}-r_j)^+$ where $c_{jS}=\min_{i\in S}c_{ij}$ and $a^+=\max\{a,0\}$. 
This problem recovers standard $k$-median~\cite{byrka2017improved,charikar2012dependent} by having $r_j=0$ for each $j\in\calC$. \kmd also recovers $k$-supplier~\cite{gonzalez1985clustering,hochbaum1985best} using the following reduction: By setting a uniform discount $r_j=R$ for all $j$, the instance has optimum 0 if and only if $R\geq R^\star$ where $R^\star$ is the optimum of the $k$-supplier problem on the same input. Since there are a polynomial number of possible values for $R^\star$, solving the $k$-supplier instance is equivalent to solving \kmd for each possible $R$. 

An algorithm is said to be a \emph{bi-criteria $(\alpha,\beta)$-approximation} for \kmd, if it always outputs a solution $\baring{S}$ s.t. $\sum_{j\in\calC}(c_{j\baring{S}}-\alpha\cdot r_j)^+\leq\beta\sum_{j\in\calC}(c_{jS^\star}-r_j)^+$, where $S^\star$ is the optimal solution.
Using the same reduction as above, we observe that such an algorithm recovers an $\alpha$-approximation to $k$-supplier and a $\beta$-approximation to $k$-median. Using the APX-hardness of $k$-supplier~\cite{hochbaum1986unified} and $k$-median~\cite{jain2002greedy}, one has $\alpha\geq3$ and $\beta\geq1+2/e$ unless $\mathrm{P=NP}$. Under current theoretical complexity assumptions, this rules out pure approximations and shows bi-criteria $(O(1),O(1))$-approximations are the strongest type of guarantee obtainable.
Guha and Munagala~\cite{guha2009exceeding} consider $k$-median with uniform discounts (\unikmd), where the input is the same as \kmd but the discounts are uniform. They directly use the primal-dual algorithm by Jain and Vazirani~\cite{jain2001approximation} and give a bi-criteria $(9,6)$-approximation algorithm. They develop a constant factor approximation for unassigned stochastic $k$-center using \unikmd.
Very recently, Ganesh~\etalcite{ganesh2021universal} give a bi-criteria $(9,6)$-approximation algorithm for \kmd, using a similar primal-dual algorithm. They also use this algorithm as a key ingredient in their bi-criteria constant-factor approximations for universal $k$-clustering problems.

\unikmd is also closely related to \emph{$k$-facility $l$-centrum}, where the input is the same as $k$-median, but the objective is the sum of $l$ largest connection costs. This problem unifies $k$-supplier and $k$-median via $l\in\{1,|\calC|\}$. Given a bi-criteria $(\alpha,\beta)$-approximation for \unikmd, one recovers a $\max\{\alpha,\beta\}$-approximation for $k$-facility $l$-centrum, by exhaustive search for the $l$-th largest connection cost $R$ in the optimum and setting a uniform discount $R$ \cite{chakrabarty2019approximation}. An improved bi-criteria $(5,5+\epsilon)$-approximation for \unikmd is given in \cite{chakrabarty2019approximation} in the more general setting of multi-dimensional discounts and ordered $k$-median.

\paragraph{Our contributions}
We present an LP rounding algorithm for \kmd and achieve better approximation factors than the previously-known bi-criteria $(9,6)$-approximation \cite{ganesh2021universal}. We employ a modestly improved iterative rounding framework by Krishnaswamy~\etalcite{krishnaswamy2018constant}, inspired by the ``quarter ball chasing'' technique in \cite{gupta2021structual}. 

In the original iterative rounding framework \cite{krishnaswamy2018constant}, one starts with a solution to the natural relaxation, and obtains an (almost) integral solution by iteratively modifying an auxiliary LP. Like several previous LP based algorithms for median clustering \cite{charikar2012dependent,swamy2016improved}, the main idea is to have a basic solution such that the tight constraints at this solution consists of laminar families. In \cite{krishnaswamy2018constant}, the laminar family is obtained by dynamically maintaining a ``core set'' of clients and creating packing constraints indexed by these clients. Gupta~\etalcite{gupta2021structual} further refines the framework by adding an additional laminar family in the auxiliary LP and analyzing the structure of the basic solutions.

To obtain an approximation for \kmd, we adapt the framework by having a new objective, where the contribution of assigning facility $i$ to client $j$ is $(\hat c_{ij}-\tau r_j)^+$ for $\tau>1$ and discretized cost function $\hat c\geq c$. Compared with the real contribution, this objective prescribes a larger discount for each client, which in turn provides an upper bound on the auxiliary objective after we discretize the given metric $c$. Since our formulation is free of outliers, we are able to have a simpler analysis of the approximation guarantee for using two laminar families in the auxiliary LP.

The LP based algorithm also gives rise to the first constant factor approximations for matroid median with discounts (\matmd) and knapsack median with discounts (\knapmd). In these two problems, the input is the same as \kmd, except that the cardinality constraint $k$ is absent. We need to either choose an independent set of facilities in a given matroid $\calM=(\calF,\calI)$, or choose weighted facilities that have a total weight no more than a given threshold. The reduction from matroid center/median, knapsack center/median to \matmd, \knapmd follows similarly.

We also obtain constant-factor approximations for clustering problems with uncertain and stochastic clients. These are application results which rely on previously-existing frameworks, e.g., the $O(1)$-approximation algorithm for unassigned stochastic $k$-center by Guha and Munagala \cite{guha2009exceeding} and the bi-criteria $(O(1),O(1))$-approximation for universal $k$-median by Ganesh~\etalcite{ganesh2021universal}. See~\cref{section:application} for more details.


\paragraph{Related work}
$k$-center and $k$-median are two of the most fundamental clustering problems. $k$-center is NP-hard to approximate to a factor better than $2$ \cite{hsu1979bottleneck}, and simple 2-approximations  are developed \cite{gonzalez1985clustering,hochbaum1985best}. There is a long line of research for $k$-median, with approximations given by primal-dual methods~\cite{jain2002greedy,jain2001approximation,li2016approximating}, local search~\cite{arya2004local} and LP rounding~\cite{charikar2002constant,charikar2012dependent}. Currently the best approximation ratio is $2.675+\epsilon$ due to Byrka~\etalcite{byrka2017improved}. No performance guarantee better than $1+2/e$ is obtainable in polynomial time unless $\mathrm{P}=\mathrm{NP}$~\cite{jain2002greedy}.

Clustering problems with stronger combinatorial constraints are extensively studied. Hochbaum and Shmoys~\cite{hochbaum1986unified} give a tight 3-approximation for knapsack center. Chen~\etalcite{chen2016matroid} give a tight 3-approximation for matroid center. Several approximations are developed for median clustering~\cite{charikar2012dependent,krishnaswamy2011matroid,kumar2012constant,swamy2016improved} with the current best ratios $7.081$ \cite{krishnaswamy2018constant} for matroid median and $6.387+\epsilon$ \cite{gupta2021structual} for knapsack median. Many variants are considered in the literature, e.g., robust matroid and knapsack center~\cite{chen2016matroid,harris2019lottery} where a certain number of clients can be discarded, and the more general robust $\mathscr{F}$-center problem~\cite{chakrabarty2018generalized}, where some clients are discarded and $\mathscr{F}$ is an independence system containing all feasible solutions.

Cormode and McGregor~\cite{cormode2008approximation} introduce the study of stochastic $k$-center and obtain a bi-criteria constant-factor approximation. Guha and Munagala~\cite{guha2009exceeding} improve their result and obtain $O(1)$-approximations for both the unassigned and assigned versions of stochastic $k$-center. Huang and Li~\cite{huang2017stochastic} consider the unassigned version of stochastic $k$-center and devise the first PTAS for fixed $k$ and constant-dimensional Euclidean metrics. Alipour and Jafari~\cite{alipour2018improvement} study multiple variants of assigned stochastic $k$-center and give various constant-factor approximations, some of which are later improved by Alipour~\cite{alipour2021improvements}.


\subsection{Organization}

In~\cref{section:k:discount}, we present a bi-criteria $(O(1),O(1))$-approximation for \kmd.
In~\cref{section:matroid:discount}, we present a bi-criteria $(O(1),O(1))$-approximation for \matmd.
In~\cref{section:knapsack:discount}, we present a bi-criteria $(O(1),O(1))$-approximation for \knapmd.
We then provide application in~\cref{section:application}, and defer some proof details to the appendix.

\section{$k$-Median with Discounts}
\label{section:k:discount}
\subsection{The Natural Relaxation}
In this section, we consider \kmd. The natural relaxation is given as follows, where $x_{ij}\in[0,1]$ is the extent we assign facility $i$ to client $j$, and $y_i\in[0,1]$ is the extent we open facility $i$. 
\begin{alignat*}{3}
    \text{min}\quad&&\sum_{j\in\calC}\sum_{i\in\calF}x_{ij}&(c_{ij}-r_j)^+\quad\tag{$\mathrm{LP\text{-}}k$}\label{lp:k:discount}\\
    \text{s.t.}\quad
    &&\sum_{i\in\calF}x_{ij}&=1&&\forall j\in\calC\\
    &&\sum_{i\in\calF}y_i&\leq k\\
    &&0\leq x_{ij}&\leq y_i\leq 1&&\forall i\in\calF,j\in\calC.
\end{alignat*}

Let $\opt_1\geq0$ be the optimum of  the instance of \kmd.
The optimum of \ref{lp:k:discount} is at most $\opt_{1}$, since the integral solution induced by the optimal solution is feasible, and the objective is equal to $\opt_1$. Fix an optimal solution $(\bar x,\bar y)$ to \ref{lp:k:discount} in what follows. We assume w.l.o.g. that $(\bar x,\bar y)$ is \emph{distance-optimal}, i.e., whenever there exists $\bar x_{ij}>0$, for each $i'$ s.t. $c_{i'j}<c_{ij}$, we have $\bar x_{i'j}=\bar y_{i'}$. We note that though this is NOT guaranteed by the LP objective due to the discounts, we can always modify the solution such that it becomes distance-optimal while keeping the objective value unchanged.

\subsection{Metric Discretization and Auxiliary LP}

In the sequel, we adapt the iterative rounding framework in \cite{krishnaswamy2018constant}.
W.l.o.g., we assume $c_{pq}\geq1$ for any $p,q\in\calF\cup\calC$ that are \emph{not} co-located. We fix $\tau>1$ and discretize the metric $c$ as follows: Let $D_{-2}=-1$, $D_{-1}=0$, and $D_\ell=a\tau^\ell$ for any $\ell\in\mathbb{Z}_{\geq0}$, where $a=\tau^b$ and $b\in [0,1)$ is a parameter we will determine later. For any $p,q\in\calF\cup\calC$, define $\hat c_{pq}=\min_{\ell:D_\ell\geq c_{pq}}D_\ell$. One notices that $\hat c$ is not necessarily a metric.
We obtain the following lemma.

\begin{lemma}\label{lemma:metric:expectation}
If $b$ is a random variable uniformly distributed on $[0,1)$, for any $i\in\calF$, $j\in\calC$, one has $\E_b[(\hat c_{ij}-\tau r_j)^+]\leq\frac{\tau-1}{\ln\tau}(c_{ij}-r_j)^+$.
\end{lemma}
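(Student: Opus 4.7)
The plan is to identify a clean distribution for the ratio $w := \hat c_{ij}/c_{ij}$ and then reduce the desired bound to two elementary one-variable inequalities. Fix $i \in \calF$, $j \in \calC$ and abbreviate $c = c_{ij}$, $r = r_j$, $\hat c = \hat c_{ij}$. If $c = 0$, then $\hat c = 0$ (attained by $D_{-1}$) and both sides vanish, so I may assume $c \geq 1$; write $c = \tau^k \gamma$ with $k = \lfloor \log_\tau c \rfloor \geq 0$ and $\gamma \in [1,\tau)$.

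Set $\beta := \tau^b$, so that $\beta$ is distributed on $[1,\tau)$ with density $(\beta \ln \tau)^{-1}$. Unfolding $\hat c = \min\{D_\ell : D_\ell \geq c\}$, one checks that $\hat c = \tau^k \beta$ when $\beta \geq \gamma$ and $\hat c = \tau^{k+1} \beta$ when $\beta < \gamma$, so $w = \hat c/c$ always lies in $[1,\tau)$. A piecewise change of variables (the two branches join continuously at $w = \tau/\gamma$) shows that $w$ has density $(w \ln \tau)^{-1}$ on $[1,\tau)$, \emph{independently} of $\gamma$. Therefore
\[
\E_b\bigl[(\hat c_{ij} - \tau r_j)^+\bigr] \;=\; \int_1^{\tau} \frac{(wc - \tau r)^+}{w \ln \tau}\, dw.
\]

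The remaining analysis is a case split on $r$. If $c \leq r$, then $wc < \tau c \leq \tau r$ on $[1,\tau)$ and both sides are $0$. If $c > r$ and $\tau r \leq c$, the integrand is nonnegative throughout $[1,\tau]$ and direct integration gives $\frac{(\tau-1)c}{\ln \tau} - \tau r$; comparing with $\frac{\tau-1}{\ln\tau}(c-r)$, the claim reduces to $\tau \ln \tau \geq \tau - 1$, which holds for $\tau > 1$ because $f(\tau) := \tau \ln \tau - (\tau-1)$ satisfies $f(1)=0$ and $f'(\tau) = \ln \tau > 0$. If $c > r$ and $\tau r > c$, the integrand is supported on $[\tau r/c, \tau]$ and the claim similarly reduces to $t - 1 \leq \tau \ln t$ at $t := c/r \in (1,\tau)$, which holds because $g(t) := \tau \ln t - (t-1)$ has $g(1) = 0$ and $g'(t) = \tau/t - 1 \geq 0$ on $[1,\tau]$. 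I expect the only nontrivial bookkeeping to be establishing the density of $w$ in step two; once that identity is in hand, each sub-case is a short integration that collapses to one of the two $\tau$-inequalities above.
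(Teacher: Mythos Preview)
Your proof is correct. Both you and the paper pivot on the ratio $w=\hat c_{ij}/c_{ij}$, but the paper then takes a shortcut that bypasses your case split: since $w\in[1,\tau)$ one has $\tau r_j\ge w r_j$, hence pointwise
\[
(\hat c_{ij}-\tau r_j)^+\;\le\;(\hat c_{ij}-w r_j)^+\;=\;w\,(c_{ij}-r_j)^+,
\]
and taking expectations finishes the lemma from the single identity $\E_b[w]=\frac{\tau-1}{\ln\tau}$, with no further case analysis. Your route---deriving the density of $w$ explicitly and then integrating $(wc-\tau r)^+$ in three regimes---is a legitimate alternative that lands on the same constant; it is more computational, but it records the stronger fact that $w$ has density $(w\ln\tau)^{-1}$ on $[1,\tau)$ independently of $c$, which the paper only uses implicitly. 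The two elementary inequalities you isolate, $\tau\ln\tau\ge\tau-1$ and $t-1\le\tau\ln t$ for $t\in[1,\tau]$, are exactly what each sub-case needs, and your derivations of them are fine.
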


\begin{proof}
When $c_{ij}=0$, the expectation is 0 and the inequality is trivial. We assume $c_{ij}\geq1$ in what follows.
Let $c_{ij}=\tau^{s+u}$ where $s\in\mathbb{Z}$ and $u\in[0,1)$. Since $D_\ell=a\tau^{\ell}=\tau^{\ell+b}$ for $\ell\geq0$ and $b$ is distributed uniformly on $[0,1)$, $\hat c_{ij}=\tau^{s+1+b}$ when $b<u$, and $\hat c_{ij}=\tau^{s+b}$ when $b\geq u$.
This shows $\hat c_{ij}/c_{ij}\in[1,\tau)$, hence
\begin{align*}
\E_b[(\hat c_{ij}-\tau r_j)^+]&\leq\E_b\left[\left(\hat c_{ij}-\frac{\hat c_{ij}}{c_{ij}}r_j\right)^+\right]
=(c_{ij}-r_j)^+\E_b\left[\hat c_{ij}/c_{ij}\right]\\
&=(c_{ij}-r_j)^+\Big(\int_0^u\tau^{1+b-u}+\int_u^1\tau^{b-u}\Big)=\frac{\tau-1}{\ln\tau}(c_{ij}-r_j)^+.\tag*{\qedhere}
\end{align*}
\end{proof}

Using the obtained solution $(\bar x,\bar y)$ to \ref{lp:k:discount}, we employ the standard facility duplication technique to make sure that $\bar x_{ij}\in\{0,\bar y_i\}$ (see, e.g., \cite{charikar2012dependent,krishnaswamy2018constant}). Let $F_j=\{i\in\calF:\bar x_{ij}>0\}$ be the \emph{outer ball} of $j\in\calC$ and thus $\bar y(F_j)=\sum_{i\in F_j}\bar y_i=1$. Let $\ell_j\in\mathbb{Z}$ be the smallest integer such that $\hat c_{ij}\leq D_{\ell_j}$ for each $i\in F_j$ called the \emph{radius level} of $j$, and $B_j=\{i\in F_j:\hat c_{ij}\leq D_{\ell_j-1}\}$ be the \emph{inner ball} of $j$. Let $C_0\leftarrow \calC$, $C_1\leftarrow\emptyset$ and $C^\star\leftarrow\emptyset$ initially, and we will iteratively move \emph{all} of $C_0$ to $C_1$, and maintain a subset $C^\star\subseteq C_1$. We first define the auxiliary LP.
\begin{alignat}{3}
    \text{min}\quad&
        \sum_{j\in C_1}\Bigg(\sum_{i\in B_j}y_i(\hat c_{ij}-\tau r_j)^++(1-y(B_j))( D_{\ell_j}-\tau r_j)^+\Bigg)\notag\\
        &+\sum_{j\in C_0}\sum_{i\in F_j}y_i(\hat c_{ij}-\tau r_j)^+\tag{$\mathrm{IR\text{-}}k$}\label{lp:iterative:k}\\
    \text{s.t.}\quad& y(F_j)=1  \quad \forall j\in C_0
    \quad\quad\quad y(\calF)\leq k \notag\\
    & y(B_j)\leq1  \quad \forall j\in C_1
    \quad\quad\quad y_i\in[0,1] \quad \forall i\in\calF.\notag\\
    & y(F_j)=1  \quad \forall j\in C^\star\notag
\end{alignat}
\begin{lemma}\label{lemma:iterative:zero}
$\bar y$ is feasible to \ref{lp:iterative:k}. There exists $b\in[0,1)$ such that the objective of $\bar y$ to \ref{lp:iterative:k} is at most $\frac{\tau-1}{\ln\tau}\opt_1$.
\end{lemma}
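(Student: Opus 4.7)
The plan is to proceed in two separate steps, corresponding to the two assertions in the lemma.

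First, I would verify feasibility of $\bar y$ to \ref{lp:iterative:k}. At the moment the auxiliary LP is first set up, we have $C_0=\calC$ and $C_1=C^\star=\emptyset$, so the only nontrivial constraints are $y(F_j)=1$ for $j\in\calC$, $y(\calF)\leq k$, and $y_i\in[0,1]$. After facility duplication the relation $\bar x_{ij}\in\{0,\bar y_i\}$ holds, and the outer ball $F_j=\{i:\bar x_{ij}>0\}$ therefore satisfies $\bar y(F_j)=\sum_{i\in F_j}\bar x_{ij}=\sum_{i\in\calF}\bar x_{ij}=1$ by the assignment constraint of \ref{lp:k:discount}. The remaining constraints $\bar y(\calF)\leq k$ and $\bar y_i\in[0,1]$ are inherited directly from \ref{lp:k:discount}.

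Second, I would bound the objective. Since $C_1=\emptyset$ the first double sum in the objective of \ref{lp:iterative:k} vanishes, leaving
\[
\text{obj}(\bar y)=\sum_{j\in\calC}\sum_{i\in F_j}\bar y_i(\hat c_{ij}-\tau r_j)^+=\sum_{j\in\calC}\sum_{i\in F_j}\bar x_{ij}(\hat c_{ij}-\tau r_j)^+,
\]
where the second equality uses $\bar x_{ij}=\bar y_i$ for $i\in F_j$ from the duplication step. Extending the inner sum over all of $\calF$ only adds zero terms, so I can write $\text{obj}(\bar y)=\sum_{j,i}\bar x_{ij}(\hat c_{ij}-\tau r_j)^+$.

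Now take the expectation over $b\sim \mathrm{Uniform}[0,1)$ and swap it inside the (finite) sum, then apply \cref{lemma:metric:expectation} to each term:
\[
\E_b[\text{obj}(\bar y)]=\sum_{j\in\calC}\sum_{i\in\calF}\bar x_{ij}\,\E_b[(\hat c_{ij}-\tau r_j)^+]\leq\frac{\tau-1}{\ln\tau}\sum_{j\in\calC}\sum_{i\in\calF}\bar x_{ij}(c_{ij}-r_j)^+\leq\frac{\tau-1}{\ln\tau}\opt_1,
\]
where the last inequality uses that $(\bar x,\bar y)$ is optimal for \ref{lp:k:discount}, whose value is a lower bound on $\opt_1$. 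Since the expectation over $b$ is at most $\frac{\tau-1}{\ln\tau}\opt_1$, an averaging argument guarantees the existence of some $b\in[0,1)$ for which the realized objective attains this bound, completing the proof.

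There is no genuine obstacle here; the proof is essentially a one-line calculation once we notice that after facility duplication the auxiliary objective agrees with $\sum_{j,i}\bar x_{ij}(\hat c_{ij}-\tau r_j)^+$, which is exactly the setting in which \cref{lemma:metric:expectation} can be applied term-by-term. The only mild care needed is to confirm that the $C_1$-sum in \ref{lp:iterative:k} truly vanishes at initialization, which is immediate from $C_1=\emptyset$.
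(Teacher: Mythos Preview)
Your proof is correct and follows essentially the same approach as the paper: verify feasibility from $C_0=\calC$, $C_1=C^\star=\emptyset$, rewrite the initial objective as $\sum_{i,j}\bar x_{ij}(\hat c_{ij}-\tau r_j)^+$, take expectation over $b$, and apply \cref{lemma:metric:expectation} termwise. The only difference is that the paper, after bounding the expectation, additionally observes that the objective is piecewise non-decreasing and right-continuous in $b$ with polynomially many breakpoints, so that a minimizing $b$ can be \emph{found} in polynomial time; your averaging argument establishes existence only, which is all the lemma actually asserts, but the efficient search is what the overall algorithm ultimately relies on.
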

\begin{proof}
In the beginning, we have $C_0=\calC$ and the other two are empty, thus $\bar y$ satisfies all the constraints by definition of $F_j$.
We start by letting $b$ be uniformly distributed on $[0,1)$. Using \cref{lemma:metric:expectation} and the linearity of expectation, the expectation of \ref{lp:iterative:k} under $\bar y$ is
\begin{equation*}
\E_b\left[\sum_{j\in\calC}\sum_{i\in\calF}\bar x_{ij}(\hat c_{ij}-\tau r_j)^+\right]
\leq\frac{\tau-1}{\ln\tau}\sum_{i,j}\bar x_{ij}(c_{ij}-r_j)^+.
\end{equation*}
Since the last sum above is the objective of \ref{lp:k:discount}, the expectation is at most $\frac{\tau-1}{\ln\tau}\opt_1$.

If we increase $b$ continuously from 0 to $1^-$, $(\hat c_{ij}-\tau r_j)^+$ is piece-wise non-decreasing and right-continuous for any $(i,j)$ by definition of $\hat c$. Specifically, suppose $c_{ij}=\tau^{s+u}$ for $s\in\mathbb{Z}$ and $u\in[0,1)$, then the expression is non-decreasing on $[0,u)$ and $[u,1)$. Therefore, the objective of \ref{lp:iterative:k} under $\bar y$, is also piece-wise non-decreasing and right-continuous, with respect to $b$ and a polynomial number of intervals. Therefore, it is easy to find $b\in[0,1)$ that minimizes the objective, which is at most the expectation.
\end{proof}

\subsection{Analysis}
Suppose we choose such $b\in[0,1)$ using \cref{lemma:iterative:zero} in the sequel. We use the iterative rounding algorithm to iteratively change \ref{lp:iterative:k}, maintain a feasible solution, and keep the objective value non-increasing. The rigorous procedures are in \cref{algo:iterative}, and here we set the \emph{step size} $h=2$.

\begin{algorithm}[ht]
\caption{{\small\textsc{IterRound}} \cite{gupta2021structual,krishnaswamy2018constant}}\label{algo:iterative}
\SetKwInOut{Input}{Input}\SetKwInOut{Output}{Output}
\SetKwBlock{Update}{{update-}$C^\star (j,h)$}{end}
\DontPrintSemicolon
\Input{outer balls $\{F_j:j\in\calC\}$, radius levels $\{\ell_j:j\in\calC\}$, inner balls $\{B_j:j\in\calC\}$, step size $h\in\{1,2\}$}
\Output{an integral solution $y^\star$ to \ref{lp:iterative:k}}
$C_0\leftarrow\calC,C_1\leftarrow\emptyset,C^\star \leftarrow \emptyset$\;
\While{true}{
find an optimal basic solution $y^\star$ to~\ref{lp:iterative:k}\;
\uIf{$\exists j\in C_0$, i.e., $C_0\neq\emptyset$}{
$C_0\leftarrow C_0\setminus\{j\},C_1\leftarrow C_1\cup\{j\},B_j\leftarrow\{i\in F_j:\hat c_{ij}\leq D_{\ell_j-1}\}$, \textbf{update-}$C^\star (j,h)$\;
}
\uElseIf{$\exists j\in C_1$ s.t. $y^\star(B_j)=1$}{
$\ell_j\leftarrow \ell_j-1,F_j\leftarrow B_j,B_j\leftarrow\{i\in F_j:\hat c_{ij}\leq D_{\ell_j-1}\}$, \textbf{update-}$C^\star (j,h)$\;
}
\lElse{
break
}}
\Return $y^\star$\;
\Update{
\lIf{$\nexists j'\in C^\star$ with $\ell_{j'}\leq \ell_j$ and $F_j\cap F_{j'}\neq\emptyset$}{$C^\star\leftarrow C^\star  \cup\{j\}$,
remove from $C^\star $ all $j'$ such that $F_j\cap F_{j'}\neq \emptyset$ and $\ell_{j'}\geq\ell_j+h$
}}
\end{algorithm}

\begin{lemma}\label{lemma:iterative:one}
\cref{algo:iterative} returns an integral $y^\star$ in polynomial time, i.e., $y^\star\in\{0,1\}^\calF$, and $C_0=\emptyset$, $C_1=\calC$.
\begin{proof}
We first notice that, for each $j\in\calC$, it can only enter $C_1$ from $C_0$ once. In each iteration, we either move some $j$ from $C_0$ to $C_1$, or reduce the radius level $\ell_j$ of some $j\in C_1$ by 1. Since the latter happens only when $y^\star(B_j)=1$, and by definition of $B_j$, we must have $\ell_j\geq0$, otherwise $B_j=\{i\in F_j:\hat c_{ij}\leq D_{-2}=-1\}=\emptyset$, contradicting $y^\star(B_j)=1$. There are $O(\log_\tau \Delta)$ possible radius levels, where $\Delta=\max_{i\in\calF,j\in\calC}c_{ij}$, thus the algorithm returns in polynomial time.

When \cref{algo:iterative} returns, none of the constraints in \ref{lp:iterative:k} corresponding to $j\in C_0$ or $j\in C_1$ is tight. By definition of the subroutine $\mathbf{update}\textbf{-}C^\star$, for each $j\neq j'$ in $C^\star$, we have $F_j\cap F_{j'}=\emptyset$ unless $|\ell_j-\ell_{j'}|=1$ (recall that $h=2$), thus the remaining tight constraints at $y^\star$ in \ref{lp:iterative:k} form two laminar families, i.e., those in $C^\star$ with \emph{odd} radius levels and those with \emph{even} radius levels. Hence $y^\star$ is an integral basic solution.
\end{proof}

\end{lemma}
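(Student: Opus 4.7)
The plan is to prove polynomial termination and integrality of the returned $y^\star$ separately. Polynomial termination follows by noting that each iteration either moves a fresh client from $C_0$ to $C_1$ (which can happen at most $|\calC|$ times, since clients never return to $C_0$) or decreases some radius level $\ell_j$ by $1$. Each $\ell_j$ starts at most at $\lceil\log_\tau\Delta\rceil+1$ with $\Delta=\max_{i\in\calF,j\in\calC}c_{ij}$, and cannot drop below $0$: if $\ell_j<0$ then $B_j\subseteq\{i\in F_j:\hat c_{ij}\leq D_{-2}=-1\}=\emptyset$, contradicting the trigger $y^\star(B_j)=1$ that caused the reduction. Hence the total iteration count, and in turn the running time, is polynomial in the input size.

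For integrality, I first characterize the tight constraints of the final basic solution. Since the loop exits, both \textbf{if} branches fail: $C_0=\emptyset$ (so $C_1=\calC$) and $y^\star(B_j)<1$ for every $j\in C_1$. Thus the only non-box constraints of \ref{lp:iterative:k} that can be tight at $y^\star$ are $y(F_j)=1$ for $j\in C^\star$ together with possibly $y(\calF)\leq k$. The structural claim I would prove is that any two distinct $j,j'\in C^\star$ with $F_j\cap F_{j'}\neq\emptyset$ must satisfy $|\ell_j-\ell_{j'}|=1$. This follows inductively from the \textbf{update-}$C^\star$ subroutine with $h=2$: when $j$ is added, no existing $j'\in C^\star$ has $\ell_{j'}\leq\ell_j$ and $F_j\cap F_{j'}\neq\emptyset$, and the removal step then deletes any $j'$ with $\ell_{j'}\geq\ell_j+2$ and $F_j\cap F_{j'}\neq\emptyset$, so only level differences of exactly $1$ survive between intersecting outer balls.

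Partitioning $C^\star$ by the parity of $\ell_j$ therefore yields two sub-families $C^\star_{\mathrm{even}}$ and $C^\star_{\mathrm{odd}}$ whose outer balls are pairwise disjoint within each sub-family, so each of the two families of tight set constraints is (trivially) laminar. The remaining task is to argue that a basic feasible solution to \ref{lp:iterative:k} whose only non-trivial tight constraints come from two such laminar families plus $y(\calF)\leq k$ must be integral. Each variable $y_i$ appears in at most two such set constraints (one per parity) plus the knapsack; I would establish integrality via a standard rank/counting argument, or equivalently by an exchange argument that rules out fractional cycles in the bipartite incidence between facilities and tight $F_j$-constraints. I expect this last step to be the main obstacle: one needs to rule out ``alternating'' fractional configurations spanning the two parities, which I would attack either by this direct argument or by invoking the analogous structural lemma developed in the iterative rounding frameworks of Krishnaswamy~\etal and Gupta~\etal.
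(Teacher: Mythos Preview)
Your proposal is correct and follows essentially the same route as the paper: the same termination bound via counting level reductions, the same observation that at termination neither the $C_0$- nor the $C_1$-constraints are tight, and the same parity split of $C^\star$ into two families of pairwise disjoint (hence laminar) outer balls. The paper is equally terse on the step you flag as the main obstacle, simply asserting integrality; the point you are missing is that the cardinality constraint $y(\calF)\leq k$ can be absorbed into either laminar family (since $\calF$ contains every $F_j$), after which the tight constraints lie in the union of two laminar families and integrality of the basic solution is the standard fact invoked from the iterative rounding framework of Krishnaswamy~\etal and Gupta~\etal.
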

\begin{lemma}\label{lemma:iterative:two}
After each iteration of \cref{algo:iterative}, $y^\star$ is feasible to the new LP with a no larger objective.
\end{lemma}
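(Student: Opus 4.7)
The plan is to consider the two branches of the main loop of \cref{algo:iterative} separately. In each iteration only one client $j$ has its data (membership in $C_0,C_1,C^\star$, inner/outer balls, or radius level) modified, so only the constraints and objective contribution associated with $j$ can change. The key structural observation driving both cases is that $\hat c$ takes only discrete values $D_\ell$: therefore every $i\in F_j\setminus B_j$ (under the current $\ell_j$) has $\hat c_{ij}=D_{\ell_j}$, and more generally the annulus peeled off when $\ell_j$ drops by $1$ consists of facilities with $\hat c_{ij}=D_{\ell_j^{\text{new}}}$.

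In Case 1 ($j$ moves from $C_0$ to $C_1$), the equality constraint $y(F_j)=1$ is replaced by $y(B_j)\leq1$, plus possibly, if update-$C^\star$ inserts $j$, a fresh $y(F_j)=1$. Feasibility is immediate: $y^\star(B_j)\leq y^\star(F_j)=1$, and the reinserted $C^\star$ constraint is literally the one just dropped. For the objective I split the old contribution $\sum_{i\in F_j}y^\star_i(\hat c_{ij}-\tau r_j)^+$ into the part over $B_j$ and the part over $F_j\setminus B_j$. Using $\hat c_{ij}=D_{\ell_j}$ on the latter and $y^\star(F_j\setminus B_j)=1-y^\star(B_j)$ (from $y^\star(F_j)=1$), the total rewrites as exactly the new contribution $\sum_{i\in B_j}y^\star_i(\hat c_{ij}-\tau r_j)^++(1-y^\star(B_j))(D_{\ell_j}-\tau r_j)^+$, so the objective is actually unchanged.

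In Case 2 ($\ell_j$ is reduced for $j\in C_1$ with $y^\star(B_j)=1$), the role previously played by $y^\star(F_j)=1$ is taken over by the branch condition $y^\star(B_j^{\text{old}})=1$. The new constraint $y(B_j^{\text{new}})\leq1$ follows from $B_j^{\text{new}}\subseteq B_j^{\text{old}}$; if $j\in C^\star$, the new $y(F_j^{\text{new}})=1$ is the same equation, since $F_j^{\text{new}}=B_j^{\text{old}}$. The objective computation is parallel to Case 1: the branch condition collapses the old contribution to $\sum_{i\in B_j^{\text{old}}}y^\star_i(\hat c_{ij}-\tau r_j)^+$, and splitting $B_j^{\text{old}}=B_j^{\text{new}}\cup(B_j^{\text{old}}\setminus B_j^{\text{new}})$---the complement carrying $\hat c_{ij}=D_{\ell_j^{\text{new}}}$---again yields the new contribution exactly.

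The remaining housekeeping is trivial: the global constraints $y(\calF)\leq k$ and $y_i\in[0,1]$ are never touched, and the only other effect of update-$C^\star$ is to remove some $j'$ from $C^\star$, which merely drops an equality constraint and cannot break feasibility. The main obstacle I expect is purely notational---keeping ``old'' versus ``new'' copies of $F_j$, $B_j$, $\ell_j$ within a single iteration straight, and confirming that the peeled-off annulus indeed carries precisely the discretized distance featured in the ``ghost facility'' term $(1-y(B_j))(D_{\ell_j}-\tau r_j)^+$ of the $C_1$ part of the objective.
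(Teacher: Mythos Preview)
Your proposal is correct and matches the paper's own proof essentially line for line: the same two-case split, the same feasibility checks (using $y^\star(B_j)\le y^\star(F_j)=1$ in Case~1 and the branch condition $y^\star(B_j^{\text{old}})=1$ in Case~2), and the same objective computation via the partition $F_j=B_j\cup(F_j\setminus B_j)$ together with $\hat c_{ij}=D_{\ell_j}$ on the annulus. Your remark that removing $j'$ from $C^\star$ only drops constraints is a small piece of housekeeping the paper leaves implicit, but otherwise the arguments are identical.
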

\begin{proof}
We consider two cases. In the first case, there exists some $j\in C_0$. The old contribution of $j$ to \ref{lp:iterative:k} is $\sum_{i\in F_j}y^\star_i(\hat c_{ij}-\tau r_j)^+$. After we move $j$ to $C_1$, because $y^\star(B_j)\leq y^\star(F_j)=1$, $y^\star$ satisfies the new constraints corresponding to $j$ in $C_1$ and $C^\star$ (if it is added to $C^\star$), and its contribution becomes $\sum_{i\in B_j}y^\star_i( \hat c_{ij}-\tau r_j)^++(1-y^\star(B_j))( D_{\ell_j}-\tau r_j)^+$. Since $F_j$ is partitioned into $B_j\cup(F_j\setminus B_j)$ and each $i\in F_j\setminus B_j$ must satisfy $\hat c_{ij}=D_{\ell_j}$ by definition, the contribution of $j$ stays the same.

In the second case, there exists $y^\star(B_j)=1$ for some $j\in C_1$. The old contribution of $j$ to \ref{lp:iterative:k} is $\sum_{i\in B_j}y^\star_i(\hat c_{ij}-\tau r_j)^+$ since $y^\star(B_j)=1$. To avoid confusion, let $F_j',B_j',\ell_j'$ denote the modified values of $j$ after the iteration. After we reduce the radius level of $j$ and attempt to add $j$ to $C^\star$, the new $F_j'=B_j$ satisfies $y^\star(F_j')=1$, hence $y^\star$ still satisfies the constraints, if $j$ is indeed added to $C^\star$. The contribution of $j$ becomes $\sum_{i\in B_j'}y^\star_i(\hat c_{ij}-\tau r_j)^++(1-y^\star(B_j'))(D_{\ell_j'}-\tau r_j)^+$, where $\ell_j'=\ell_j-1$. Since the new $F_j'=B_j$ is partitioned into $B_j'\cup(F_j'\setminus B_j')$, and each $i\in F_j'\setminus B_j'$ must satisfy $\hat c_{ij}=D_{\ell_j'}$ by definition, the contribution is unchanged.
\end{proof}

\begin{lemma}\label{lemma:iterative:three}
After \cref{algo:iterative} returns $y^\star$, its objective in \ref{lp:iterative:k} is at most that of $\bar y$ in the original auxiliary relaxation. Moreover, for each $j\in \calC$, there exists $i\in\calF$ such that $y^\star_i=1$ and $c_{ij}\leq\frac{3\tau^h-1}{\tau^h-1}D_{\ell_j}$, $h\in\{1,2\}$.
\end{lemma}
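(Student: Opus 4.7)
The first part of the statement is essentially a one-line induction: \cref{lemma:iterative:two} says each iteration of the while loop preserves feasibility and does not increase the auxiliary objective, so the final $y^\star$'s objective is bounded by that of $\bar y$ in the original \ref{lp:iterative:k}.

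For the distance claim, my plan is a chain-of-witnesses argument. Fix $j \in \calC$ and consider the state at termination. If $j \in C^\star$, the tight equality $y^\star(F_j) = 1$ together with integrality of $y^\star$ forces some $i \in F_j$ with $y^\star_i = 1$, for which $c_{ij} \leq \hat c_{ij} \leq D_{\ell_j}$ — well within the claim. Otherwise, consider the most recent invocation of update-$C^\star(j, h)$, at some time $t_0$. Since $j$ was not admitted, the else-branch of the subroutine guarantees a witness $j_1 \in C^\star$ at time $t_0$ with $\ell_{j_1}^{(t_0)} \leq \ell_j$ and $F_j^{(t_0)} \cap F_{j_1}^{(t_0)} \neq \emptyset$ (note that $j$'s parameters cannot change after $t_0$, so this $\ell_j$ is the final value). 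If $j_1$ remains in $C^\star$ at termination, the chain stops; otherwise $j_1$ was removed at some later moment $t_1$ when some $j_2$ was added, supplying $\ell_{j_1}^{(t_1)} \geq \ell_{j_2}^{(t_1)} + h$ and $F_{j_1}^{(t_1)} \cap F_{j_2}^{(t_1)} \neq \emptyset$. Iterating produces a finite chain $j = j_0, j_1, \ldots, j_K$ with $j_K \in C^\star$ at termination (finiteness follows because the relevant radius levels drop by at least $h$ past the first link and are bounded below by $-1$).

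The technical heart is bounding $c_{j, i^*}$, where $i^*$ is the unique open facility in $F_{j_K}$ at termination (guaranteed by the tight constraint $y^\star(F_{j_K}) = 1$ plus integrality). Writing $L_k^+$ and $L_k^-$ for the radius level of $j_k$ at its admission to $C^\star$ and at its removal (or at termination for $j_K$), the chain inequalities together with monotonicity of $\ell$ in time give $L_1^+ \leq \ell_j$ and $L_k^+ \leq L_{k-1}^- - h \leq L_{k-1}^+ - h$ for $k \geq 2$, hence $L_k^+ \leq \ell_j - (k-1) h$. Each chain edge contributes $c_{j_{k-1} j_k} \leq D_{L_{k-1}^-} + D_{L_k^+}$ via a facility in the appropriate snapshot intersection, while $c_{j_K i^*} \leq D_{L_K^-}$. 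Substituting the level bounds with $q := \tau^{-h}$ collapses the total into a geometric series which simplifies to
\[
c_{j i^*} \;\leq\; D_{\ell_j}\!\left(2 + (1+q) \sum_{m=0}^{K-2} q^m + q^{K-1}\right) \;=\; D_{\ell_j} \cdot \frac{3 - q - 2 q^K}{1 - q} \;\leq\; D_{\ell_j} \cdot \frac{3\tau^h - 1}{\tau^h - 1}.
\]

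The principal obstacle is the time-indexing book-keeping: $F_{j_k}$ only shrinks and $\ell_{j_k}$ only decreases during the algorithm's run, and each inequality produced by update-$C^\star$ is valid only at its own snapshot. One must invoke each intersection/radius fact at the right moment and then propagate via monotonicity to recover bounds in terms of the final $\ell_j$. Once these snapshots are sorted out, the remaining geometric-sum simplification is routine.
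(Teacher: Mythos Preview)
Your chain-of-witnesses argument is the paper's proof with the induction unrolled: the paper isolates a claim that any $j$ added to $C^\star$ at level $\ell$ has an open facility within $\frac{\tau^h+1}{\tau^h-1}D_\ell$ (proved by induction on $\ell$, following exactly the same removal chain), and then handles a general client via one extra hop to its blocking witness; your explicit geometric sum reaches the same $\frac{3\tau^h-1}{\tau^h-1}$ bound. Two small imprecisions worth noting: your dichotomy overlooks the case where $j$ \emph{was} admitted at its last update-$C^\star$ call and subsequently evicted (the first witness then comes from the eviction event, with a level drop of at least $h$, so the bound only improves), and $L_1^+$ should really be $\ell_{j_1}$ at time $t_0$ rather than at $j_1$'s own admission, which could have occurred earlier at a larger level. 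Neither affects the validity of the argument.
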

\begin{proof}
The first assertion follows from \cref{lemma:iterative:two}, since the objective of $y^\star$ is non-increasing within each iteration, and we find an optimal basic solution at the beginning of each iteration. For the second assertion, we first notice that $y^\star$ is an integral solution in $\{0,1\}^\calF$. 

Due to $D_{-2}=-1$, every client starts with a radius level of at least $-1$. When $\ell_j=-1$ during \cref{algo:iterative}, its inner ball becomes $B_j=\{i\in F_j:\hat c_{ij}\leq D_{-2}=-1\}=\emptyset$, thus $y^\star(B_j)=0$ and $\ell_j$ stays $-1$ thereafter. In other words, $\ell_j\geq-1$ for each $j$ throughout \cref{algo:iterative}.

If $\ell_j=-1$ and $j\in C^\star$ during \cref{algo:iterative}, according to the subroutine $\mathbf{update}\textbf{-}C^\star$, only $j'$ with $\ell_{j'}<\ell_j$ can remove $j$ from $C^\star$, but this implies $\ell_{j'}\leq-2$ which is impossible. Therefore, whenever there exists $j\in C^\star$ with $\ell_j=-1$, it cannot be removed from $C^\star$. We first need the following claim.
\begin{claim}\label{claim:near:facility}
If $j$ is added to $C^\star$ when it has radius level $\ell$, there exists $i\in\calF$ with $y^\star_i=1$ in the final solution, and $c_{ij}\leq\frac{\tau^h+1}{\tau^h-1}D_\ell$, $h\in\{1,2\}$.
\end{claim}
\begin{proof}
We prove it using induction on $\ell$. If $\ell=-1$, the claim follows from the argument above since $j$ cannot be removed from $C^\star$, and each $i\in F_j$ satisfies $c_{ij}\leq D_\ell$. 
Suppose the claim holds for radius levels up to $\ell-1$, we consider $j$ added to $C^\star$ with radius level $\ell$. If it remains in $C^\star$ to the end, then the claim follows similarly as $c_{ij}\leq D_\ell$ for each $i\in F_j$. If it is removed by another $j'$ later in the algorithm, it means that $\ell_{j'}\leq\ell_j-h\leq\ell-h$ since the radius level of $j$ is non-increasing. Using the induction hypothesis, there exists $i\in\calF$ with $y^\star_i=1$ and $c_{ij'}\leq \frac{\tau^h+1}{\tau^h-1}D_{\ell_j'}\leq \frac{\tau^h+1}{\tau^h-1}D_{\ell}/\tau^h$, thus using the triangle inequality, we have $c_{ij}\leq D_{\ell_{j'}}+D_{\ell_j}+\frac{\tau^h+1}{\tau^h-1}D_{\ell}/\tau^h\leq D_\ell(1+\frac{1}{\tau^h}+\frac{\tau^h+1}{\tau^h(\tau^h-1)})=\frac{\tau^h+1}{\tau^h-1}D_\ell$. This finishes the induction.
\end{proof}

For any client $j$ with a \emph{final} radius level of $\ell$, when we reduce $\ell_j$ to $\ell$ and invoke the subroutine on it, if indeed we can add $j$ to $C^\star$, we directly invoke \cref{claim:near:facility} and get the desired result.
Otherwise, $j$ cannot be added to $C^\star$ due to there being $j'\in C^\star$ s.t. $F_j\cap F_{j'}\neq\emptyset$ and $\ell_{j'}\leq\ell_j=\ell$, in which case we use \cref{claim:near:facility} \emph{on the time we add $j'$ to $C^\star$ with radius level $\ell_{j'}$} and conclude the existence of $i\in\calF$ with $y^\star_i=1$ and $c_{ij}\leq c_{ij'}+D_{\ell_{j'}}+D_{\ell_j}\leq \frac{\tau^h+1}{\tau^h-1}D_{\ell_{j'}}+2D_{\ell_j}\leq\frac{3\tau^h-1}{\tau^h-1}D_{\ell}$.
\end{proof}

Using the final output $y^\star$, we define the solution $\bar F=\{i\in\calF:y^\star_i=1\}$ and remove any co-located copies of facilities, if there are any (notice this does not affect \cref{lemma:iterative:three}). We prove the following main theorem.
\begin{theorem}\label{theorem:main:iterative:k}
Let $\alpha_\tau=\frac{\tau(3\tau^2-1)}{\tau^2-1}$ and $\beta_\tau=\frac{3\tau^2-1}{(\tau+1)\ln\tau}$. There exists a polynomial time bi-criteria $(\alpha_\tau,\beta_\tau)$-approximation algorithm for \kmd for any $\tau>1$.
\end{theorem}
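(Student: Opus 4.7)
The plan is to combine the structural guarantees from \cref{lemma:iterative:one,lemma:iterative:two,lemma:iterative:three} with a simple algebraic manipulation of the objective in \ref{lp:iterative:k}. Concretely, run \cref{algo:iterative} with step size $h=2$ starting from the distance-optimal LP solution $(\bar x,\bar y)$ and the choice of $b\in[0,1)$ guaranteed by \cref{lemma:iterative:zero}, then output $\bar F=\{i\in\calF:y^\star_i=1\}$ after removing co-located copies introduced by facility duplication.

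First I would verify feasibility: since the cardinality constraint $y(\calF)\leq k$ is maintained throughout \cref{algo:iterative}, and de-duplication of co-located copies only decreases the count, $|\bar F|\leq k$. Next, I would observe that on termination of \cref{algo:iterative} we have $C_0=\emptyset$, $C_1=\calC$, and the ``else if'' branch no longer fires, so $y^\star(B_j)<1$ for every $j\in C_1$; integrality of $y^\star$ forces $y^\star(B_j)=0$ for all $j\in\calC$. Consequently the objective of $y^\star$ in \ref{lp:iterative:k} collapses to $\sum_{j\in\calC}(D_{\ell_j}-\tau r_j)^+$, where $\ell_j$ denotes the \emph{final} radius level of $j$. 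Combining \cref{lemma:iterative:zero} with the monotonicity guarantee of \cref{lemma:iterative:three} then yields
\[
\sum_{j\in\calC}(D_{\ell_j}-\tau r_j)^+\;\leq\;\frac{\tau-1}{\ln\tau}\,\opt_1.
\]

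Second, set $\gamma=\frac{3\tau^h-1}{\tau^h-1}$ with $h=2$. By \cref{lemma:iterative:three}, for every $j\in\calC$ there exists $i\in\bar F$ with $c_{ij}\leq\gamma D_{\ell_j}$, so $c_{j\bar F}\leq\gamma D_{\ell_j}$. Noting $\gamma\tau=\alpha_\tau$, one has
\[
(c_{j\bar F}-\alpha_\tau r_j)^+\;\leq\;(\gamma D_{\ell_j}-\gamma\tau r_j)^+\;=\;\gamma\,(D_{\ell_j}-\tau r_j)^+.
\]
Summing over $j$ and plugging in the previous bound gives
\[
\sum_{j\in\calC}(c_{j\bar F}-\alpha_\tau r_j)^+\;\leq\;\gamma\cdot\frac{\tau-1}{\ln\tau}\,\opt_1\;=\;\frac{(3\tau^2-1)(\tau-1)}{(\tau^2-1)\ln\tau}\,\opt_1\;=\;\beta_\tau\,\opt_1,
\]
where the final equality uses the factorization $\tau^2-1=(\tau-1)(\tau+1)$.

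There is essentially no hard step here: the heavy lifting was done in \cref{lemma:metric:expectation} (the discretization), \cref{lemma:iterative:zero} (initial feasibility and cost bound) and, most importantly, \cref{lemma:iterative:three} (the ``quarter-ball chasing''-style geometric bound $c_{ij}\leq\gamma D_{\ell_j}$). The only point that deserves care is the observation that the optimal basic solution at termination satisfies $y^\star(B_j)=0$ for every $j$, so that the auxiliary objective really does reduce to the pure radius-level sum; given this, the approximation ratio follows from a one-line algebraic rearrangement. Polynomial running time is inherited from \cref{lemma:iterative:one}.
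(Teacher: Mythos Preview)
Your proposal is correct and follows essentially the same approach as the paper's proof: both collapse the auxiliary objective to $\sum_{j}(D_{\ell_j}-\tau r_j)^+$ via integrality and $y^\star(B_j)=0$, bound it by $\frac{\tau-1}{\ln\tau}\opt_1$ via \cref{lemma:iterative:zero} and \cref{lemma:iterative:three}, and then scale by $\gamma=\frac{3\tau^2-1}{\tau^2-1}$ using the geometric bound from \cref{lemma:iterative:three}. Your write-up is slightly more explicit about feasibility of $\bar F$ and the reason $y^\star(B_j)=0$, but the argument is identical.
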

\begin{proof}
We consider each client $j$ and its distance to $\bar F$, i.e., $c_{j\bar F}$. Since $y^\star$ is an integral solution, $C_0=\emptyset$ and $C_1=\calC$ when \cref{algo:iterative} finishes, it follows that $y^\star(B_j)=0$ for each $j\in\calC$ by definition of the algorithm. Using \cref{lemma:iterative:two}, the objective of $y^\star$ is at most that of $\bar y$ \emph{before we run the algorithm}, that is,
\begin{equation}
    \sum_{j\in\calC}(D_{\ell_j}-\tau r_j)^+
    \leq\sum_{j\in\calC}\sum_{i\in F_j}\bar y_i(\hat c_{ij}-\tau r_j)^+.\label{eqn:objective:decay}
\end{equation}
Here, we omit the sum over each $i\in B_j$ since each such $y^\star_i=0$ follows from $y^\star(B_j)=0$.
For each $j$, we have $c_{j\bar F}\leq\frac{3\tau^2-1}{\tau^2-1}D_{\ell_j}$ using \cref{lemma:iterative:three}, because we run \cref{algo:iterative} with $h=2$. Summing over all clients, one has
\begin{align}
    \sum_{j\in\calC}\left(c_{j\bar F}-\frac{\tau(3\tau^2-1)}{\tau^2-1}r_j\right)^+
    &\leq\sum_{j\in\calC} \left(\frac{3\tau^2-1}{\tau^2-1}D_{\ell_j}-\frac{\tau(3\tau^2-1)}{\tau^2-1}r_j\right)^+\notag\\
    &=\frac{3\tau^2-1}{\tau^2-1}\sum_{j\in\calC}\left(D_{\ell_j}-\tau r_j\right)^+,
    \label{eqn:objective:decay2}
\end{align}

Combining \eqref{eqn:objective:decay} and \eqref{eqn:objective:decay2} and using \cref{lemma:iterative:zero} for the initial objective of $\bar y$, we obtain
\[
\sum_{j\in\calC}\left(c_{j\bar F}-\frac{\tau(3\tau^2-1)}{\tau^2-1}r_j\right)^+\leq\frac{3\tau^2-1}{(\tau+1)\ln\tau}\opt_1,
\]
whence the theorem follows.
\end{proof}

\begin{remark*}
\cref{theorem:main:iterative:k} provides a trade-off between the approximation factors. For instance, one can choose $\tau=1.91$ s.t. $(\alpha_\tau,\beta_\tau)<(7.173,5.281)$ minimizing $\beta_\tau$, or $\tau=1.592$ s.t. $(\alpha_\tau,\beta_\tau)<(6.851,5.479)$ minimizing $\alpha_\tau$. Both improve previous bi-criteria $(9,6)$-approximation algorithms for \kmd \cite{ganesh2021universal,guha2009exceeding}.
\end{remark*}

\section{Matroid Median with Discounts}\label{section:matroid:discount}

In this section, we consider \matmd. Formally, \matmd has the same input as \kmd except that we need to open facilities that constitute an independent set of an input matroid $\calM=(\calF,\calI)$, instead of having an upper bound on the number of open facilities. \kmd is a special case of \matmd where the given matroid is uniform with rank $k$, that is, $\calI=\{F\subseteq\calF:|F|\leq k\}$.
The natural relaxation is the same as \ref{lp:k:discount}, except that we replace the cardinality constraint $y(\calF)\leq k$ with the constraints of a matroid polytope, that is, for $r_\calM:2^\calF\rightarrow\mathbb{Z}_{\geq0}$ as the rank function of $\calM$, $y(S)\leq r_\calM(S)$ for each $S\subseteq\calF$, following a classic result by Edmonds~\cite{edmonds2001submodular}.

Our algorithm proceeds very similarly to \kmd. We solve the relaxation and obtain a fractional solution $(\bar x,\bar y)$ that is distance-optimal. We discretize the metric into $\hat c$ and construct the same auxiliary LP as \ref{lp:iterative:k}, except that we replace the cardinality constraint with matroid constraints, too. The discretized metric is constructed in a way such that its objective is at most $\frac{\tau-1}{\ln\tau}$ times the optimum, akin to \cref{lemma:metric:expectation}.

We use \cref{algo:iterative} with a smaller step size $h=1$. This is because, unlike \kmd with the cardinality constraint, the matroid constraints in the auxiliary LP are non-trivial and the algorithm only admits an integral solution if the sets $\{F_j:j\in C^\star\}$ is a \emph{single} laminar family, as in \cite{krishnaswamy2018constant}.
We obtain an integral solution and define the solution $\bar F$ in the same way. Using the same arguments as \cref{theorem:main:iterative:k}, we obtain the following result. When $\tau=2.36$, we have $(\alpha_\tau',\beta_\tau')<(10.551,7.081)$, recovering the current best result for matroid median \cite{krishnaswamy2018constant}.
\begin{theorem}\label{theorem:main:iterative:matroid}
Let $\alpha_\tau'=\frac{\tau(3\tau-1)}{\tau-1}$ and $\beta_\tau'=\frac{3\tau-1}{\ln\tau}$. There exists a polynomial time bi-criteria $(\alpha_\tau',\beta_\tau')$-approximation algorithm for \matmd for any $\tau>1$.
\end{theorem}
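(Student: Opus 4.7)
The plan is to follow the same high-level pipeline as the proof of \cref{theorem:main:iterative:k}, replacing the cardinality constraint by matroid rank constraints and switching the step size in \cref{algo:iterative} from $h=2$ to $h=1$. First I would solve the natural LP (matroid analogue of \ref{lp:k:discount}) to obtain a distance-optimal fractional solution $(\bar x,\bar y)$ with objective at most $\opt_2$, the optimum of \matmd. Using facility duplication I can assume $\bar x_{ij}\in\{0,\bar y_i\}$, and define the outer balls $F_j$, radius levels $\ell_j$, and inner balls $B_j$ exactly as in \cref{section:k:discount}. I would then set up the auxiliary LP (call it $(\mathrm{IR\text{-}mat})$) which is identical to \ref{lp:iterative:k} except that the constraint $y(\calF)\leq k$ is replaced by $y(S)\leq r_\calM(S)$ for every $S\subseteq\calF$. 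After that, the randomized discretization argument of \cref{lemma:metric:expectation} applies verbatim, and the analogue of \cref{lemma:iterative:zero} follows identically: $\bar y$ is feasible to the auxiliary LP and for a suitable $b\in[0,1)$ its objective is at most $\frac{\tau-1}{\ln\tau}\opt_2$.

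Next I would run \cref{algo:iterative} with step size $h=1$, which is the crucial change. The feasibility/objective-monotonicity argument of \cref{lemma:iterative:two} is purely local and carries over without modification, since it only uses the semantics of $C_0,C_1,C^\star$ and the partition $F_j=B_j\cup(F_j\setminus B_j)$. The termination argument of \cref{lemma:iterative:one} also goes through, except that the integrality of the final basic solution must now be justified by a different structural argument: with $h=1$ the subroutine \textbf{update}-$C^\star$ guarantees $F_j\cap F_{j'}=\emptyset$ for any two distinct $j,j'\in C^\star$ at termination, so the tight covering constraints form a \emph{single} laminar family; combined with the matroid rank constraints, the constraint matrix of the tight system is totally dual integral (this is the standard laminar-family-plus-matroid argument from Krishnaswamy~\etalcite{krishnaswamy2018constant}), so any basic solution is integral.

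For the distance guarantee I would reprove \cref{lemma:iterative:three} with $h=1$. The only place where $h$ enters is in the inductive claim \cref{claim:near:facility}: with $h=1$ the recursion $c_{ij}\leq D_{\ell_{j'}}+D_{\ell_j}+\frac{\tau^h+1}{\tau^h-1}D_{\ell}/\tau^h$ still closes under the bound $\frac{\tau+1}{\tau-1}D_\ell$, and the final triangle-inequality step gives $c_{j\bar F}\leq\frac{3\tau-1}{\tau-1}D_{\ell_j}$ for every $j\in\calC$. The remainder of the proof is a copy of the calculation in \cref{theorem:main:iterative:k}: since $y^\star(B_j)=0$ at termination for each $j$, \cref{lemma:iterative:two} and \cref{lemma:iterative:zero} give
\[
\sum_{j\in\calC}(D_{\ell_j}-\tau r_j)^+\leq\frac{\tau-1}{\ln\tau}\opt_2,
\]
and combining this with the per-client bound produces
\[
\sum_{j\in\calC}\left(c_{j\bar F}-\frac{\tau(3\tau-1)}{\tau-1}r_j\right)^+\leq\frac{3\tau-1}{\tau-1}\cdot\frac{\tau-1}{\ln\tau}\opt_2=\frac{3\tau-1}{\ln\tau}\opt_2,
\]
which is exactly the claimed $(\alpha'_\tau,\beta'_\tau)$-guarantee.

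The main obstacle I anticipate is verifying that the single-laminar-family plus matroid-polytope system still forces integrality of basic solutions of $(\mathrm{IR\text{-}mat})$; the other ingredients (rounding up the metric, feasibility preservation, the inductive distance bound) are either syntactically identical to the \kmd case or differ only through the value of $h$. Once the integrality is in hand, everything else is a matter of substituting $h=1$ into the formulas of \cref{theorem:main:iterative:k}.
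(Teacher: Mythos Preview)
Your proposal is correct and follows essentially the same route as the paper: replace the cardinality constraint by matroid rank constraints, run \cref{algo:iterative} with $h=1$ so that the $F_j$'s for $j\in C^\star$ are pairwise disjoint, invoke the laminar-plus-matroid integrality argument of \cite{krishnaswamy2018constant}, and then substitute $h=1$ into the bounds of \cref{lemma:iterative:three} and \cref{theorem:main:iterative:k}. The paper's own proof is just the one-paragraph sketch in \cref{section:matroid:discount}, and your writeup expands exactly the points it leaves implicit.
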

\section{Knapsack Median with Discounts}\label{section:knapsack:discount}

In \knapmd, we are given a knapsack constraint instead of a cardinality constraint as in \kmd. Formally, each facility $i\in\calF$ has a weight $w_i\geq0$ and we need to open facilities $F\subseteq\calF$ with a combined weight no more than a given threshold $W$, that is, $w(F)=\sum_{i\in F}w_i\leq W$. \kmd is a special case of \knapmd where each $w_i=1$ and $W=k$.

\subsection{Sparsify the Instance}
The natural relaxation for the standard knapsack median problem has an unbounded integrality gap~\cite{kumar2012constant}. We overcome it by adapting the pre-processing by Krishnaswamy~\etalcite{krishnaswamy2018constant}. First, we let $n=|\calC|$, $n_0=|\calF\cup\calC|$ and fix an unknown optimal solution $F^\star$ to the problem. Let $\opt_2\geq0$ be the optimal objective thereof, that is, $\opt_2=\sum_{j\in\calC}(c_{jF^\star}-r_j)^+$, and $c_0\geq0$ be the largest contribution of any single client to it, that is, $c_0=\max_{j\in\calC}(c_{jF^\star}-r_j)^+$. It follows that $\opt_2\in[c_0,nc_0]$.

Fix a small constant $\epsilon>0$.
There are a polynomial number of possible values for $c_0$, and for each $c_0>0$, there are $O(\log_{1+\epsilon}n)$ possible values for $\est=(1+\epsilon)^{\ceil{\log_{1+\epsilon}\opt_2}}$. Thus, we enumerate all possible pairs $(c_0,\est)$ and assume it is as desired in what follows, i.e., $c_0$ is exactly the largest contribution of any client in the unknown optimum, and either $\est=0$ when $c_0=0$ or $\est=(1+\epsilon)^{\ceil{\log_{1+\epsilon}\opt_2}}$ when $c_0>0$.

For a \knapmd instance $\scr{I}=(\calF,\calC,c,r_j,w_i,W)$, we start by creating a new ``sparse'' instance where some facility set $F_0$ is pre-selected and some clients are pre-connected to $F_0$. The formal definition is given below.

\begin{definition}\label{definition:sparse:knapsack}
Let $\scr{J}=(\calF,\calC'\subseteq\calC,c,r_j,w_i,W,F_0)$ be an \emph{extended} instance of \knapmd, $\rho,\delta\in(0,1)$, and $F^\star$ be a solution to $\scr{J}$ with objective at most $\est\geq0$, that is, $w(F^\star)\leq W$ and $\sum_{j\in\calC'}(c_{jF^\star}-r_j)^+\leq\est$. Let $\kappa^\star_p=\argmin_{i\in F^\star}c_{pi}$ for any $p\in\calF\cup\calC'$, where ties are broken arbitrarily and consistently.
We say that $\scr{J}$ is $(\rho,\delta,\est)$-sparse w.r.t. $F^\star$ if
\begin{enumerate}[label=(\ref{definition:sparse:knapsack}.\arabic*), ref=(\ref{definition:sparse:knapsack}.\arabic*), leftmargin=1.2cm]
    \item\label{item:sparse1:knapsack} $\forall i\in F^\star\setminus F_0$, $\sum_{j\in\calC':\kappa^\star_j=i}(c_{ij}-r_j)^+\leq\rho\est$,
    \item\label{item:sparse2:knapsack} $\forall p\in\calF\cup\calC'$, $\sum\limits_{j\in\calC':c_{jp}\leq\delta c_{pF^\star}}(c_{pF^\star}-r_j/(1-\delta))^+\leq\rho\est$.
\end{enumerate}
\end{definition}

\begin{theorem}\label{theorem:sparsifying:knapsack}
For $\rho,\delta\in(0,1)$, $\scr{I}=(\calF,\calC,c,r_j,w_i,W)$ an instance of \knapmd, and an upper bound $\est$ on the objective $\opt_2$ of the optimal solution $F^\star$, there exists an algorithm that outputs $n_0^{O(1/(\rho-\rho\delta))}$ many extended instances, such that there exists one output instance $\scr{J}=(\calF,\calC',c,r_j,w_i,W,F_0)$ that satisfies
\begin{enumerate}[label=(\ref{theorem:sparsifying:knapsack}.\arabic*), ref=(\ref{theorem:sparsifying:knapsack}.\arabic*), leftmargin=1.2cm]
    \item\label{item:sparse3:knapsack} $\scr{J}$ is $(\rho,\delta,\est)$-sparse w.r.t. $F^\star$ and $F_0\subseteq F^\star$,
    \item\label{item:sparse4:knapsack} $\frac{1-\delta}{1+\delta}\sum\limits_{j\in\calC\setminus\calC'}\left(c_{jF_0}-\frac{1+\delta}{1-\delta}r_j\right)^++\sum\limits_{j\in\calC'}(c_{jF^\star}-r_j)^+\leq\est$.
\end{enumerate}
\end{theorem}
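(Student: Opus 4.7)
The plan is to construct $\scr{J}$ via an iterative peeling procedure, branching over polynomially many guesses at each step to account for the unknown $F^\star$. I would initialize $F_0\leftarrow\emptyset$ and $\calC'\leftarrow\calC$, then repeat: if \ref{item:sparse1:knapsack} fails at some $i\in F^\star\setminus F_0$ (call this Event A), add $i$ to $F_0$ and leave $\calC'$ unchanged; else if \ref{item:sparse2:knapsack} fails at some $p\in\calF\cup\calC'$ (Event B), add $\kappa^\star_p$ to $F_0$ and remove from $\calC'$ every $j$ with $c_{jp}\leq\delta c_{pF^\star}$. Since the algorithm does not see $F^\star$, at each iteration it branches over the event type and either a facility $i\in\calF$ to append to $F_0$ (Event A) or a pair $(p,q)\in(\calF\cup\calC)\times\calF$ with $q$ standing in for $\kappa^\star_p$ (Event B); on the correct guess, $c_{pF^\star}=c_{pq}$ makes the peeled set deterministic.

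Next I would bound the iteration count. Let $\Phi=\sum_{j\in\calC'}(c_{jF^\star}-r_j)^+$, which is at most $\opt_2\leq\est$ initially. For any $j$ peeled in an Event B at point $p$, the triangle inequality yields $c_{jF^\star}\geq c_{pF^\star}-c_{jp}\geq(1-\delta)c_{pF^\star}$, and a direct algebraic check gives $(c_{jF^\star}-r_j)^+\geq(1-\delta)(c_{pF^\star}-r_j/(1-\delta))^+$; summing, $\Phi$ drops by more than $(1-\delta)\rho\est$ per Event B, so at most $1/((1-\delta)\rho)$ such iterations occur. For Event A, note $\sum_{i\in F^\star}\sum_{j\in\calC':\kappa^\star_j=i}(c_{ij}-r_j)^+\leq\opt_2\leq\est$ holds at every iteration and each per-$i$ summand is monotone non-increasing in $\calC'$; hence the set of facilities that can ever trigger Event A is fixed by the initial state and has size at most $1/\rho$. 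Combining yields $O(1/(\rho-\rho\delta))$ total iterations, and with $O(n_0^2)$ branching choices per iteration we produce $n_0^{O(1/(\rho-\rho\delta))}$ instances.

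On the branch where all guesses are correct, the output satisfies \ref{item:sparse3:knapsack} by termination together with the maintained invariant $F_0\subseteq F^\star$. For \ref{item:sparse4:knapsack}, every $j\in\calC\setminus\calC'$ is peeled during some Event B at a point $p$, so $c_{jF_0}\leq c_{j\kappa^\star_p}\leq c_{jp}+c_{p\kappa^\star_p}\leq(1+\delta)c_{pF^\star}\leq\frac{1+\delta}{1-\delta}c_{jF^\star}$. Pulling this constant out of the positive part yields $\frac{1-\delta}{1+\delta}(c_{jF_0}-\frac{1+\delta}{1-\delta}r_j)^+\leq(c_{jF^\star}-r_j)^+$; summing over $\calC\setminus\calC'$ and adding $\sum_{j\in\calC'}(c_{jF^\star}-r_j)^+$ to both sides bounds the left-hand side of \ref{item:sparse4:knapsack} by $\sum_{j\in\calC}(c_{jF^\star}-r_j)^+=\opt_2\leq\est$.

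The main obstacle I anticipate is the Event A count: because we cannot identify $\{j:\kappa^\star_j=i\}$ without knowledge of $F^\star\setminus F_0$, no clients can be peeled in Event A, and a naive potential argument would cap it only at $|F^\star|$. The resolution is the double-counting observation above, which fixes the total set of Event-A-triggering facilities to at most $1/\rho$ from the very start via monotonicity of the per-$i$ sums; this absorbs Event A into the overall $O(1/(\rho-\rho\delta))$ iteration bound without requiring any extra peeling.
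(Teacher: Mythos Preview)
Your proposal is correct and follows essentially the same approach as the paper. The only organizational difference is that the paper runs the two event types as two sequential phases (first exhausting all Event~A additions while $\calC'=\calC$, then handling Event~B), whereas you interleave them and recover the same $1/\rho$ bound on Event~A via the monotonicity observation; the enumeration argument and the proof of \ref{item:sparse4:knapsack} are identical.
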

\begin{proof}
We begin with the premise that $F^\star$ is known to us and remedy this requirement later. We construct one such extended instance in the following two phases. Set $F_0\leftarrow\emptyset$ and $\calC'\leftarrow\calC$ initially. It follows from the conditions in the theorem that $\est\geq\sum_{j\in\calC}(c_{jF^\star}-r_j)^+$.

First, iteratively for each $i\in F^\star\setminus F_0$ that satisfies $\sum_{j\in\calC':\kappa^\star_j=i}(c_{ij}-r_j)^+>\rho\est$, we set $F_0\leftarrow F_0\cup\{i\}$. After this phase, for each $i\in F^\star\setminus F_0$, \ref{item:sparse1:knapsack} is satisfied. Because $\kappa^\star_j$ is the nearest facility to $j$ in $F^\star$, $\est\geq\sum_{i\in F^\star}\sum_{j:\kappa^\star_j=i}(c_{ij}-r_j)^+$ and at most $O(1/\rho)$ facilities are added to $F_0$ in this phase.

Then, iteratively for each $p\in\calF\cup\calC'$ s.t. $\sum_{j\in\calC':c_{jp}\leq \delta c_{pF^\star}}(c_{pF^\star}-r_j/(1-\delta))^+>\rho\est$, we set $F_0\leftarrow F_0\cup\{\kappa^\star_p\}$, and remove from $\calC'$ all clients within distance $\delta c_{pF^\star}$ from $p$, that is,
$\calC'\leftarrow\calC'\setminus\{j\in\calC':c_{jp}\leq \delta c_{pF^\star}\}$. For each such removed $j$, using the triangle inequality, the nearest open facility in $F^\star$ is at a distance at least $c_{jF^\star}\geq -c_{jp}+c_{pF^\star}\geq(1-\delta)c_{pF^\star}$, so the total contribution to the true objective of $\scr{I}$ removed is at least
$
\sum_{j\in\calC':c_{jp}\leq\delta c_{pF^\star}}((1-\delta)c_{pF^\star}-r_j)^+>(1-\delta)\rho\est.
$
After this phase, \ref{item:sparse2:knapsack} is satisfied. Since the optimal objective of $\scr{I}$ is at most $\est$, at most $O(1/(\rho-\rho\delta))$ facilities are added to $F_0$ and at most that many closed balls are removed from $\calC'$. This finishes the construction. 

Since $|F_0|=O(1/(\rho-\rho\delta))$ and $\calC'$ is obtained from $\calC$ by removing at most that many closed balls, we can enumerate all possible such procedures and the number of choices is at most $n_0^{O(1/(\rho-\rho\delta))}$. This eliminates the dependence on $F^\star$, and the existence of such an extended instance is guaranteed, thus \ref{item:sparse3:knapsack} follows.

Finally, in the second phase above, for each $j\in\calC\setminus\calC'$ removed by $p$, using the triangle inequality, $c_{jF_0}\leq c_{j\kappa^\star_p}\leq c_{jp}+c_{pF^\star}\leq(1+\delta)c_{pF^\star}\leq\frac{1+\delta}{1-\delta}c_{jF^\star}$. We thus have
$\frac{1-\delta}{1+\delta}\sum_{j\in\calC\setminus\calC'}\left(c_{jF_0}-\frac{1+\delta}{1-\delta}r_j\right)^++\sum_{j\in\calC'}(c_{jF^\star}-r_j)^+
\leq\sum_{j\in\calC}(c_{jF^\star}-r_j)^+\leq\est,$
hence \ref{item:sparse4:knapsack} follows.
\end{proof}


\subsection{The Strengthened Relaxation}

In this section, we proceed with an extended instance $\scr{J}=(\calF,\calC',c,r_j,w_i,W,F_0)$ that satisfies \cref{theorem:sparsifying:knapsack}.
For each $j\in \calC'$, define $R_j$ as the maximum $R\geq0$ that satisfies $\sum_{j'\in\calC':c_{jj'}\leq\delta R}(R-r_{j'}/(1-\delta))^+\leq\rho\est$. Using \ref{item:sparse2:knapsack}, one has $R_j\geq c_{jF^\star}$. The relaxation is as follows.
\begin{alignat*}{3}
    \text{min}\quad&\sum_{j\in\calC'}\sum_{i\in\calF}x_{ij}(c_{ij}-r_j)^+\tag{$\mathrm{LP\text{-}Knap}$}\label{lp:knap:discount}\\
    \text{s.t.}\quad
    &\sum_{i\in\calF}x_{ij}=1\quad\forall j\in\calC'\quad\quad\sum_{i\in\calF}w_iy_i\leq W\\
    &0\leq x_{ij}\leq y_i\leq 1\quad\forall i\in\calF,j\in\calC'\\
    &y_i=1 \quad\forall i\in F_0\tag{$\mathrm{K4}$}\label{lp:knap:discount4}\\
    &x_{ij}=0 \quad\forall i,j,\text{ s.t. }c_{ij}>R_j\tag{$\mathrm{K5}$}\label{lp:knap:discount5}\\
    &x_{ij}=0 \quad\forall i\notin F_0,j,\text{ s.t. }c_{ij}-r_j>\rho\est\tag{$\mathrm{K6}$}\label{lp:knap:discount6}\\
    &\sum_{j\in\calC'}x_{ij}(c_{ij}-r_j)^+\leq\rho\est y_i  \quad\forall i\notin F_0.\tag{$\mathrm{K7}$}\label{lp:knap:discount7}
\end{alignat*}

\begin{lemma}\label{lemma:knap:objective}
\ref{lp:knap:discount} is feasible. Let $(\bar x,\bar y)$ be its optimal solution, then its objective $U$ satisfies
\[
\frac{1-\delta}{1+\delta}\sum_{j\in\calC\setminus\calC'}\left(c_{jF_0}-\frac{1+\delta}{1-\delta}r_j\right)^++U\leq\est.
\]
\end{lemma}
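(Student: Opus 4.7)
The plan is to establish feasibility by exhibiting the integral solution induced by the unknown optimum $F^\star$ from~\cref{theorem:sparsifying:knapsack}, and then use optimality of $(\bar x,\bar y)$ together with~\ref{item:sparse4:knapsack} to obtain the objective bound.

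First I would construct a candidate LP solution $(x^\star,y^\star)$ by setting $y^\star_i = 1$ for $i \in F^\star$ and $y^\star_i = 0$ otherwise, and $x^\star_{ij} = 1$ iff $i = \kappa^\star_j$ (the nearest facility in $F^\star$ to $j$), for each $j \in \calC'$. The standard constraints (assignment, knapsack, $x \leq y$) are immediate from $w(F^\star)\leq W$ and the definition of $\kappa^\star_j$. Constraint~\ref{lp:knap:discount4} holds because $F_0 \subseteq F^\star$ by~\ref{item:sparse3:knapsack}. Constraint~\ref{lp:knap:discount5} holds because $c_{\kappa^\star_j j} = c_{jF^\star} \leq R_j$, where the last inequality follows from the definition of $R_j$ combined with~\ref{item:sparse2:knapsack} applied at $p = j$.

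The next step is to verify the sparsity-induced constraints~\ref{lp:knap:discount6} and~\ref{lp:knap:discount7}. For each $i \in F^\star \setminus F_0$, property~\ref{item:sparse1:knapsack} yields $\sum_{j : \kappa^\star_j = i}(c_{ij}-r_j)^+ \leq \rho\est$. Since each single term in this sum is dominated by the total, for every individual client $j$ with $\kappa^\star_j = i \notin F_0$ we get $(c_{ij}-r_j)^+ \leq \rho\est$, which in turn implies $c_{ij}-r_j \leq \rho\est$ and hence~\ref{lp:knap:discount6}. The same bound applied to the sum itself establishes~\ref{lp:knap:discount7}: if $y^\star_i = 1$ and $i \notin F_0$, the LHS is exactly $\sum_{j : \kappa^\star_j = i}(c_{ij}-r_j)^+ \leq \rho\est$; if $y^\star_i = 0$ the LHS vanishes.

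Once feasibility is in hand, the objective value of $(x^\star, y^\star)$ in~\ref{lp:knap:discount} is simply $\sum_{j \in \calC'}(c_{jF^\star}-r_j)^+$. By optimality of $(\bar x, \bar y)$, $U \leq \sum_{j\in\calC'}(c_{jF^\star}-r_j)^+$, and plugging this into~\ref{item:sparse4:knapsack} yields the desired inequality. The only mildly delicate point is~\ref{lp:knap:discount6}, where one must remember that constraint is posed on the signed quantity $c_{ij}-r_j$ rather than on its positive part; this is handled by noting $(c_{ij}-r_j)^+ \leq \rho\est$ covers both signs since the case $c_{ij} \leq r_j$ makes the constraint trivial. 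Everything else is routine bookkeeping using the properties guaranteed by the sparsification theorem.
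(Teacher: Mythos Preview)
Your proposal is correct and follows essentially the same approach as the paper: exhibit the integral solution induced by $F^\star$, verify each constraint of \ref{lp:knap:discount} using $F_0\subseteq F^\star$ and the sparsity properties \ref{item:sparse1:knapsack}, \ref{item:sparse2:knapsack}, then bound $U$ by the objective of this feasible solution and invoke \ref{item:sparse4:knapsack}. Your treatment is in fact slightly more explicit than the paper's, particularly in handling the signed form of \ref{lp:knap:discount6} and the $y^\star_i=0$ case of \ref{lp:knap:discount7}.
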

\begin{proof}
Let $(x^{(0)},y^{(0)})$ be the integral solution induced by the optimal solution $F^\star$. It satisfies the first three sets of constraints in \ref{lp:knap:discount} by definition. \cref{lp:knap:discount4} is also satisfied because $F_0\subseteq F^\star$ due to \ref{item:sparse3:knapsack}.

For \cref{lp:knap:discount5}, we consider the definition of $R_j$. Since $\scr{J}$ is $(\rho,\delta,\est)$-sparse w.r.t. $F^\star$, we have $R_j\geq c_{jF^\star}$ using \ref{item:sparse2:knapsack}. This shows that $c_{ij}$ is not a facility-client connection for $j$ when $c_{ij}>R_j\geq c_{jF^\star}$, i.e., $x^{(0)}_{ij}=0$ when $c_{ij}>R_j$, hence \cref{lp:knap:discount5} is satisfied by $(x^{(0)},y^{(0)})$.

For \cref{lp:knap:discount6}, using the sparse property \ref{item:sparse1:knapsack}, any client $j$ connected to $i\notin F_0$ satisfies $(c_{ij}-r_j)^+\leq\rho\est$. \cref{lp:knap:discount7} is also satisfied according to \ref{item:sparse1:knapsack}.
The second assertion of the lemma follows directly from \ref{item:sparse4:knapsack}.
\end{proof}

\paragraph{Prepare the fractional solution} After we solve \ref{lp:knap:discount} and have a fractional solution $(\bar x,\bar y)$, we use the facility duplication technique \cite{charikar2012dependent,krishnaswamy2018constant} to obtain additional properties. We have the following lemma.
\begin{lemma}\label{lemma:facility:duplication}
We can add co-located copies of facilities to $\calF$, create a vector $\hat y\in[0,1]^\calF$, define $F_j\subseteq\{i\in\calF:c_{ij}\leq R_j\}$ for each $j\in\calC'$ such that the following holds.
\begin{enumerate}[label=(\ref{lemma:facility:duplication}.\arabic*), ref=(\ref{lemma:facility:duplication}.\arabic*), leftmargin=1.2cm]
    \item\label{item:fully:connected} Fully connected clients: $\hat y(F_j)=\sum_{i\in F_j}\hat y_i=1$.
    \item\label{item:knapsack:constraint} Knapsack constraint satisfied: $\sum_{i\in\calF}w_i\hat y_i\leq W$.
    \item\label{item:preselected} Pre-selected: $\forall i\in F_0$, $\sum_{i'\text{ co-located with }i}\hat y_{i'}=1$.
    \item\label{item:total:cost} LP objective: $\sum_{j\in\calC'}\sum_{i\in F_j}\hat y_i(c_{ij}-r_j)^+\leq U$.
    \item\label{item:star:cost} Bounded star cost: for each $i$ \emph{not} co-located with $F_0$, $\sum_{j\in\calC':i\in F_j}(c_{ij}-r_j)^+\leq2\rho\est$.
\end{enumerate}
\end{lemma}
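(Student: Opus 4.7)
The plan is to invoke the standard facility duplication technique of \cite{charikar2012dependent,krishnaswamy2018constant} on $(\bar x,\bar y)$. For each client $j\in\calC'$, sort the facilities $i$ with $\bar x_{ij}>0$ in non-decreasing order of $c_{ij}$, scan them while accumulating $\bar y$-mass, and split the unique ``boundary'' facility whose prefix would otherwise cross $1$ into two co-located copies whose $\bar y$-values sum to its original value; the weight $w_i$ is distributed between the copies proportionally. Processing the clients one by one in this way produces a polynomial-sized extended facility set together with a vector $\hat y\in[0,1]^\calF$ and a corresponding rounded assignment, where for each $j$ we let $F_j$ be the resulting mass-$1$ prefix. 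Constraint \ref{lp:knap:discount5} then forces $F_j\subseteq\{i:c_{ij}\le R_j\}$.

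Items \ref{item:fully:connected}--\ref{item:total:cost} are immediate from this construction: \ref{item:fully:connected} holds by the choice of $F_j$; \ref{item:knapsack:constraint} holds because splitting preserves $\sum_i w_i \hat y_i=\sum_i w_i \bar y_i\le W$; \ref{item:preselected} is inherited from \ref{lp:knap:discount4}, which forces $\bar y_i=1$ on $F_0$ and hence the total co-located $\hat y$-mass stays at $1$; and \ref{item:total:cost} holds because the splits preserve the LP objective value $U$.

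The only nontrivial item is the star-cost bound \ref{item:star:cost}. The splits are arranged so that each final copy $i$ acts as a ``boundary'' facility for at most one client $j^\dagger\in\calC'$, meaning that $i\in F_{j^\dagger}$ yet $\bar x_{i,j^\dagger}<\hat y_i$, while every other $j$ with $i\in F_j$ satisfies $\bar x_{ij}=\hat y_i$. Fix any copy $i$ not co-located with $F_0$; constraint \ref{lp:knap:discount7} then yields
\[
\hat y_i\sum_{j:\,i\in F_j,\,j\ne j^\dagger}(c_{ij}-r_j)^+
\;\le\; \sum_{j\in\calC'}\bar x_{ij}(c_{ij}-r_j)^+
\;\le\; \rho\est\,\hat y_i,
\]
so the non-boundary contributions sum to at most $\rho\est$. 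Since $i\notin F_0$ and $\bar x_{i,j^\dagger}>0$, constraint \ref{lp:knap:discount6} bounds the remaining boundary contribution $(c_{i,j^\dagger}-r_{j^\dagger})^+$ by $\rho\est$, and adding the two yields the claimed $2\rho\est$ bound.

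The main obstacle is the bookkeeping that ensures every final copy has at most one boundary client; this is handled by processing clients sequentially, never re-splitting an already ``completed'' copy, and always splitting whichever residual copy currently straddles the active client's cumulative-mass threshold.
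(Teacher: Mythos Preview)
Your argument for \ref{item:star:cost} has a genuine gap. The chain
\[
\hat y_i\sum_{j:\,i\in F_j,\,j\ne j^\dagger}(c_{ij}-r_j)^+
\;\le\; \sum_{j\in\calC'}\bar x_{ij}(c_{ij}-r_j)^+
\;\le\; \rho\est\,\hat y_i
\]
fails at both steps. The right-hand inequality is simply not what \eqref{lp:knap:discount7} says: that constraint bounds the sum by $\rho\est\,\bar y_{\hat\imath}$ for the \emph{original} facility $\hat\imath$, not by $\rho\est\,\hat y_i$ for a copy. More importantly, the first inequality would need $\hat y_i\le\bar x_{\hat\imath,j}$ for every non-boundary $j$ with $i\in F_j$, and your construction does not deliver this (nor the equality you state). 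Because you accumulate $\bar y$-mass rather than $\bar x$-mass when building each prefix, a client $j$ with tiny $\bar x_{\hat\imath,j}$ can still absorb the full copy. Concretely: take $\hat\imath\notin F_0$ with $\bar y_{\hat\imath}=1$ and $N$ clients $j_1,\dots,j_N$, each having $\bar x_{\hat\imath,j_k}=1/N$, $(c_{\hat\imath j_k}-r_{j_k})^+=\rho\est$, and $\hat\imath$ as its closest positively-assigned facility. Constraints \eqref{lp:knap:discount6} and \eqref{lp:knap:discount7} hold, yet your procedure sets $F_{j_k}=\{\hat\imath\}$ for all $k$ with no split, so $\hat\imath$ has \emph{no} boundary client and star cost $N\rho\est$.

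The paper's construction is genuinely different and is what makes \ref{item:star:cost} go through. It does not sort by distance; instead, for each pair $(i,j)$ with $\bar x_{ij}>0$ it orders the current co-located copies of $i$ by their \emph{current star cost} and assigns $j$ to the cheapest copies up to total mass exactly $\bar x_{ij}$, splitting if needed. An induction (\cref{claim:star:cost:gap}) shows the max and min star costs among copies of $i$ stay within $\rho\est$ of each other throughout; since \eqref{lp:knap:discount7} forces the $\hat y$-weighted average to be at most $\rho\est$, the minimum is at most $\rho\est$ and hence the maximum at most $2\rho\est$. The missing idea in your proposal is this load-balancing across copies; a single-boundary-client bookkeeping cannot suffice, because \eqref{lp:knap:discount7} only controls the $\bar x$-weighted star cost, not the unweighted one.

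A smaller point: splitting $w_i$ ``proportionally'' between copies is not correct. Each copy must carry the full weight $w_i$ (opening a copy means opening the original facility); with that convention $\sum_i w_i\hat y_i=\sum_i w_i\bar y_i\le W$ because the copies' $\hat y$-values sum to $\bar y_i$.
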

\begin{proof}
To avoid confusion, we create a copy $\calF'$ of $\calF$, let $\hat y\in[0,1]^{\calF'}$ with $\hat y\leftarrow\bar y$ initially and set $F_j=\{i\in\calF':\bar x_{ij}>0\}$.
$\hat y$ and $\{F_j\}_{j\in\calC'}$ are always supported on $\calF'$.
We will keep $\calF$ and $(\bar x,\bar y)$ unchanged in the following, and iteratively modify $\calF'$ and $\hat y$ such that the properties are satisfied. For each $i\in\calF'$, define its star cost as $\sum_{j\in\calC':i\in F_j}(c_{ij}-r_j)^+$.

For each $i\in\calF$ and each $j\in\calC'$ such that $\bar x_{ij}>0$, we apply the following steps. Order \emph{all facilities} in $\calF'$  that are co-located with $i$ in non-decreasing order of their current star costs. Choose the first several copies such that the sum of their $\hat y$ values is \emph{exactly} $\bar x_{ij}$. If necessary, split the last chosen $i'$ into two co-located copies $\{i_1',i_2'\}$: $\calF'\leftarrow(\calF'\setminus\{i'\})\cup\{i_1',i_2'\}$, $i_1'$ is chosen with $\hat y_{i_1'}$ set to whichever value that is needed, and $i_2'$ is not chosen with $\hat y_{i_2'}\leftarrow\hat y_{i'}-\hat y_{i_1'}$. Replace all copies that are co-located with $i$ in $F_j$ with the selected copies of $i$. For each other $j'$, if $i'$ is split and $i'\in F_{j'}$, set $F_{j'}\leftarrow (F_{j'}\setminus\{i'\})\cup\{i_1',i_2'\}$.

After the procedures, we replace $\calF$ with $\calF'$ and transfer $\hat y$, $\{F_j\}_{j\in\calC'}$ back such that they are supported on $\calF$. \ref{item:fully:connected} to \ref{item:total:cost} are easy to verify, because the vector $\hat y$ and assignments $F_j$ are the same as $(\bar x,\bar y)$ up to facility duplication (when we make facility copies, the facility weights are also duplicated). It remains to prove \ref{item:star:cost}.

In the beginning, for each $i\in F_j$, because $\bar x_{ij}>0$, one has $(c_{ij}-r_j)^+\leq\rho\est$ by \cref{lp:knap:discount6}. Fix an original facility $i\in\calF$ and let $J_i=\{j\in\calC':\bar x_{ij}>0\}$. W.l.o.g., suppose the clients are $J_i=\{j_1,\dots,j_\ell\}$ such that in our procedures above, we consider $(i,j_1),\dots,(i,j_\ell)$ in order. We need the following claim.
\begin{claim}\label{claim:star:cost:gap}
At any point during the procedures, for $\calF(i)\subseteq\calF'$ being the current set of copies co-located with $i$, the difference between maximum and minimum star costs in $\calF(i)$ is at most $\rho\est$.
\end{claim}
\begin{proof}
After $(i,j_t)$ is considered, one has $i'\in F_{j_s}$ for each $i'\in \calF(i)$ and each $s>t$, since when we split facilities, we add both copies to $F_{j_s}$, and later clients are still not handled. We use induction on the number $t\geq0$ of pairs $(i,j_s),s\in[\ell]$ we \emph{have considered}. 
When $t=0$, the base case is trivial since there is only one copy of $i$.

Suppose the claim holds for $t-1$, $t\geq1$. \emph{Before} we consider $(i,j_t)$, using the induction hypothesis, the difference between the maximum and minimum star costs is at most $\rho\est$. By ordering the copies of $i$ in non-decreasing order of their current star costs, we subtract $(c_{j_ti}-r_{j_t})^+$ from \emph{all} star costs of $\calF(i)$, then add it back to some copies with the smallest star costs. Since $(c_{j_ti}-r_{j_t})^+\leq\rho\est$ using \cref{lp:knap:discount6}, the difference between the maximum and minimum \emph{after} $(i,j_t)$ is still at most $\rho\est$. This proves the induction step.
\end{proof}

In particular, \cref{claim:near:facility} holds after we consider all pairs for $i$. By definition of the procedures, for each $j\in J_i$, the sum of $\hat y$ over $F_j\cap\calF(i)$ is exactly $\bar x_{ij}$, therefore
\begin{align*}
    &\sum_{i'\in\calF(i)}\hat y_{i'}\sum_{j\in J_i:i'\in F_j}(c_{ji}-r_j)^+
    =\sum_{j\in J_i}(c_{ji}-r_j)^+\sum_{i'\in \calF(i):i'\in F_j}\hat y_{i'}\\
    &=\sum_{j\in J_i}\bar x_{ij}(c_{ij}-r_j)^+
    =\sum_{j\in\calC'}\bar x_{ij}(c_{ij}-r_j)^+\leq\rho\est \bar y_i,
\end{align*}
where the last inequality is due to \cref{lp:knap:discount7}. The above inequality implies that
\[
\min_{i'\in\calF(i)}\sum_{j\in J_i:i'\in F_j}(c_{ji}-r_j)^+\leq\frac{\rho\est\bar y_i}{\sum_{i'\in\calF(i)}\hat y_{i'}}=\rho\est.
\]
Using \cref{claim:near:facility}, the maximum star cost among $\calF(i)$ is at most $2\rho\est$ and thus \ref{item:star:cost} holds.
\end{proof}

\subsection{Iterative Rounding}

Using the new fractional solution $\hat y$, we discretize the metric similarly to \kmd and construct an auxiliary LP for rounding, using the knapsack constraint instead of the cardinality constraint. We find a suitable $b\in[0,1)$ as the parameter for metric discretization using the same method as \cref{lemma:iterative:zero} and obtain an initial objective value at most $\frac{\tau-1}{\ln\tau}$ times the original objective of \ref{lp:knap:discount}, that is, at most $\frac{\tau-1}{\ln\tau}U$ according to \ref{item:total:cost}. It is worth noting that, in the extended instance $\scr{J}$, we intend to keep all facilities in $F_0$ open, thus at the beginning of iterative rounding, we add a \emph{virtual client} to $C^\star$ for each $i\in F_0$ that is co-located with it (discount set to 0). The initial radius level of the virtual client is equal to $-1$, hence it will never be removed from $C^\star$, and there is always at least a unit volume of facilities co-located with such $i$. An upshot is that we always have $F_0\subseteq \bar F$ in the final solution $\bar F$, as can be seen below.

We use the same iterative rounding algorithm \cref{algo:iterative} with step size $h=1$. When the algorithm returns an optimal basic solution $y'\in[0,1]^\calF$, the tight constraints at $y'$ belong to a single laminar family and a knapsack constraint. Suppose there are $t$ strictly fractional variables in $y'$. Since $y'$ is a basic solution, there exist at least $t$ non-trivial (that is, not in the form of $y_i\geq0$ or $y_i\leq1$) constraints that are linearly independent. But at most $t/2$ such tight constraints can be found in the laminar family, hence $t\leq t/2+1$ and we have $t\in\{0,1,2\}$.

\begin{lemma}\label{lemma:iterative:knap}
After \cref{algo:iterative} returns $y'$, its objective is at most that of $\hat y$ in the original auxiliary relaxation. For each $j\in \calC'$, one has $
\sum_{i\in\calF:c_{ij}\leq\frac{3\tau-1}{\tau-1}D_{\ell_j}}y'_i\geq1.
$
\end{lemma}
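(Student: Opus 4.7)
The first assertion follows immediately from \cref{lemma:iterative:two}: inside each iteration the algorithm first recomputes an optimal basic solution (which only lowers the objective) and then modifies the LP while preserving feasibility and objective value of the current $y'$, so the objective is monotone non-increasing throughout. The interesting content is the second assertion. Unlike in \cref{lemma:iterative:three}, $y'$ need not be integral here: with step size $h=1$, the tight non-box constraints at $y'$ form a single laminar family plus the knapsack constraint, so a standard basic-solution counting argument only guarantees at most two fractional coordinates. Hence \cref{claim:near:facility}, which produced a single facility with $y^\star_i=1$, must be replaced by a fractional analogue stated in terms of $y'$-mass.

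The plan is to prove by induction on $\ell$ the following fractional analogue of \cref{claim:near:facility}: \emph{if $j$ is ever in $C^\star$ with current radius level $\ell$, then in the final $y'$ one has $\sum_{i\in\calF:\,c_{ij}\leq\frac{\tau+1}{\tau-1}D_\ell}y'_i\geq 1$}. For the base case $\ell=-1$, such $j$ cannot be removed from $C^\star$ (removal requires $\ell_{j''}\leq\ell-h=-2$, impossible), so the equality $y'(F_j)=1$ persists as a constraint in the final LP and $F_j\subseteq\{i:c_{ij}\leq D_\ell\}$ gives the bound at once. For the inductive step I split into two cases: if $j$ remains in $C^\star$ until termination with some terminal level $\ell^f\leq\ell$, the same direct argument applies at level $\ell^f$; if $j$ is later removed by $j''$ with $\ell_{j''}\leq\ell-1$, applying the hypothesis to $j''$ gives a set $S\subseteq\calF$ with $\sum_{i\in S}y'_i\geq 1$ and $c_{ij''}\leq\frac{\tau+1}{\tau-1}D_{\ell_{j''}}$ for every $i\in S$, and the triangle inequality through any witness $p\in F_j\cap F_{j''}$ yields $c_{ij}\leq\frac{\tau+1}{\tau-1}D_{\ell_{j''}}+D_{\ell_{j''}}+D_\ell$, which collapses to at most $\frac{\tau+1}{\tau-1}D_\ell$ via $D_{\ell_{j''}}\leq D_\ell/\tau$.

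To conclude, I fix any $j\in\calC'$ with final radius level $\ell_j$ and examine the last invocation of the subroutine on $j$. Either $j$ was admitted to $C^\star$, in which case the fractional claim applied at $j$ immediately delivers the required bound, or $j$ was blocked by some $j'\in C^\star$ with $F_j\cap F_{j'}\neq\emptyset$ and $\ell_{j'}\leq\ell_j$; in the latter case applying the claim to $j'$ together with the triangle inequality through any $p\in F_j\cap F_{j'}$ gives, for each $i$ in the guaranteed set, $c_{ij}\leq\frac{\tau+1}{\tau-1}D_{\ell_{j'}}+D_{\ell_{j'}}+D_{\ell_j}=\frac{2\tau}{\tau-1}D_{\ell_{j'}}+D_{\ell_j}\leq\frac{3\tau-1}{\tau-1}D_{\ell_j}$, using $\ell_{j'}\leq\ell_j$. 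The main obstacle is bookkeeping rather than geometry: one must track carefully that $y'(F_{j'})=1$ is an equality constraint enforced throughout \cref{algo:iterative} for every $j'\in C^\star$, so even a fractional $y'$ still delivers a full unit of mass on $F_{j'}$ in the final basic solution, and the triangle-inequality reasoning of the $k$-median case then transfers wholesale to this fractional setting. Once this is internalized, the argument is a direct port of \cref{claim:near:facility} and \cref{lemma:iterative:three} specialized to $h=1$.
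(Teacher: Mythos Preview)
Your proposal is correct and follows essentially the same route as the paper, which simply defers to \cref{lemma:iterative:three} and \cref{claim:near:facility} with the remark that $y'$ may be fractional. You have correctly spelled out the needed fractional analogue of \cref{claim:near:facility}---replacing ``there exists $i$ with $y^\star_i=1$'' by ``$\sum_i y'_i\geq 1$ over the relevant ball,'' using that $y'(F_{j'})=1$ is an equality constraint for every $j'\in C^\star$---and the inductive and triangle-inequality computations for $h=1$ are exactly those of the paper specialized to this case.
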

\begin{proof}
Using the tight constraints for $C^\star$, the arguments are the same as \cref{lemma:iterative:three}, except that $y'$ is not necessarily integral, hence we omit the proof here.
\end{proof}
Using \cref{lemma:iterative:knap} and the same argument as \cref{theorem:main:iterative:k}, if we connect each client $j\in\calC'$, fractionally, to the nearest unit volume of facilities in $y'$,
the following discounted objective where we subtract the scaled discount $\frac{\tau(3\tau-1)}{\tau-1}r_j$, 
is at most $\frac{3\tau-1}{\tau-1}\cdot\frac{\tau-1}{\ln\tau}U=\frac{3\tau-1}{\ln\tau}U$,
\begin{equation}
\sum_{j\in\calC'}\left[\sum_{i\in B_j}y'_i\left(c_{ij}-\frac{\tau(3\tau-1)}{\tau-1}r_j\right)^++(1-y'(B_j))\left(\frac{3\tau-1}{\tau-1}D_{\ell_j}-\frac{\tau(3\tau-1)}{\tau-1}r_j\right)^+\right].\label{eqn:final:objective1}
\end{equation}

\paragraph{Open integral facilities}
If $t=0$, set $y^\star\leftarrow y'$ and we are done; if $t=1$ and $y'_{i_2}\in(0,1)$. We let $y^\star$ be the same as $y'$ except that $y^\star_{i_2}=0$; if $t=2$, suppose $y'_{i_1},y'_{i_2}\in(0,1)$ and w.l.o.g. $w_{i_1}\leq w_{i_2}$. Using the previous argument on basic solutions, we must have $y'_{i_1}+y'_{i_2}=1$. We let $y^\star$ be the same as $y'$ except that $y^\star_{i_1}=1$, $y^\star_{i_2}=0$.
Define $\bar F=\{i\in\calF:y^\star_i=1\}$, and remove co-located facilities if there are any (again, this does not affect \cref{lemma:iterative:knap}). It is easy to verify that in each case, $w(\bar F)\leq W$ holds.

We consider \emph{final} outer balls $F_j$, radius levels $\ell_j$ and inner balls $B_j$ in the following.
For $j\in\calC'$ such that $i_2\in F_j\setminus B_j$, closing $i_2$ does not change the contribution of $j$ in \cref{eqn:final:objective1} since $i_2\notin B_j$. This leaves only the cases where $i_2\in B_j$ and \emph{not} co-located with $F_0$.
Let $J=\{j\in\calC':i_2\in B_j\}$, $i'\in\bar F$ be the nearest open facility to $i_2$, and $d=c_{i_2i'}$.
We consider the following two cases, where $\gamma>0$ is a parameter to be determined.
\begin{description}
\item[$J_1=\{j\in J:c_{ji_2}\geq\gamma d\}$.]
We have $i_2\in F_j$ before iterative rounding for each $j\in J_1$, since each $F_j$ only shrinks during \cref{algo:iterative}. Using \ref{item:star:cost}, one has $\sum_{j\in J_1}(c_{ji_2}-r_j)^+\leq2\rho\est$. Using triangle inequality, $c_{ji'}\leq c_{ji_2}+c_{i_2i'}\leq(1+1/\gamma)c_{ji_2}$, which shows
\begin{equation}
\sum_{j\in J_1}\left(c_{ji'}-\left(1+\frac{1}{\gamma}\right)r_j\right)^+\leq\frac{1+\gamma}{\gamma}2\rho\est.\label{eqn:reroute:1}
\end{equation}
\item[$J_2=J\setminus J_1=\{j\in J:c_{ji_2}<\gamma d\}$.] Fix any $j\in J_2$. Let $\sigma=\frac{\tau(3\tau-1)}{\tau-1}$ for convenience. Using \cref{lemma:iterative:knap} and triangle inequality, $c_{i'i_2}-c_{ji_2}$ is at most the distance from $j$ to its nearest open facility in $\bar F$, thus $d-\gamma d\leq c_{i'i_2}-c_{ji_2}\leq\frac{3\tau-1}{\tau-1}D_{\ell_j}\leq\sigma R_j$. The last inequality is due to $D_{\ell_j}\leq \tau R_j$, and thus $R_j\geq(1-\gamma)d/\sigma$. Suppose for now that the choice of $\gamma$ is such that $2\gamma d\leq \delta R_j$. It follows from triangle inequality that $c_{jj'}<2\gamma d$ for each $j'\in J_2$, therefore $J_2\subseteq\{j'\in\calC':c_{jj'}\leq\delta R_j\}$. For each $j'\in J_2$, we have $c_{j'i'}\leq(1+\gamma)d\leq\frac{\sigma(1+\gamma)}{1-\gamma}R_j\defeq \eta_{\tau,\gamma}R_j$.
By definition of $R_j$,
\begin{align}
    &\sum_{j'\in J_2}(c_{j'i'}-\frac{\eta_{\tau,\gamma}}{1-\delta}r_{j'})^+
    \leq\sum_{j'\in\calC':c_{jj'}\leq\delta R_j}(\eta_{\tau,\gamma}R_j-\frac{\eta_{\tau,\gamma}}{1-\delta}r_{j'})^+\notag\\
    &=\eta_{\tau,\gamma}\sum_{j'\in\calC':c_{jj'}\leq\delta R_j}(R_j-\frac{1}{1-\delta}r_{j'})^+
    \leq\eta_{\tau,\gamma}\rho\est.\label{eqn:reroute:2}
\end{align}
To make sure $2\gamma d\leq \delta R_j$, we let $(1-\gamma)/\sigma=2\gamma/\delta$, thus $\gamma=\delta/(2\sigma+\delta)$. As a result, we have $\eta_{\tau,\gamma}=\delta+\sigma$ and $1+1/\gamma=2\sigma/\delta+2$.
\end{description}

\paragraph{Conclusion}
We let $\delta=2/3$.
In \cref{eqn:final:objective1}, the fractional solution has objective at most $\frac{3\tau-1}{\ln\tau}U$ against discounts $\sigma r_j$. By changing the underlying solution to an integral one $\bar F$, and bounding the ``rerouting'' cost in \cref{eqn:reroute:1}, \cref{eqn:reroute:2}, the clients in $J$ additionally contribute at most $\rho(7\sigma+14/3)\est$ to the overall objective, by considering discounts $(3\sigma+2)r_j$.

Finally, since we use virtual clients in iterative rounding and keep $F_0\subseteq \bar F$, by connecting each $j\in\calC\setminus\calC'$ to its nearest facility in $\bar F$ and using \cref{lemma:knap:objective}, the clients in $\calC\setminus\calC'$ have a total cost
$\sum_{j\in\calC\setminus\calC'}(c_{j\bar F}-5r_j)^+\leq5(\est-U)$.
This shows that the total objective of $\bar F$, with discounts $\max\{\sigma,3\sigma+2,5\}r_j$, is at most $\big(\max\{5,\frac{3\tau-1}{\ln\tau}\}+\rho(7\sigma+14/3)\big)\est$. Recall $\est\leq(1+\epsilon)\opt_2$. We let $\epsilon$ and $\rho$ be small constants and thus have the following result.
\begin{theorem}\label{theorem:main:iterative:knapsack}
Let $\alpha_\tau''=\frac{3\tau(3\tau-1)}{\tau-1}+2$ and $\beta_\tau''=\frac{3\tau-1}{\ln\tau}$. There exists a bi-criteria $(\alpha_\tau'',\beta_\tau''+O(\rho+\epsilon))$-approximation algorithm for \knapmd for any $\tau>1$ and $\epsilon,\rho>0$ in running time $n_0^{O(1/\rho)}/\epsilon$.
\end{theorem}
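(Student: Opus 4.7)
My plan is to wire together the machinery already built up in this section. First I would enumerate the polynomially many candidates for $c_0$ and the $O(\log_{1+\epsilon} n)$ candidates for $\est=(1+\epsilon)^{\lceil\log_{1+\epsilon}\opt_2\rceil}$; for each guess, invoke~\cref{theorem:sparsifying:knapsack} to produce $n_0^{O(1/(\rho-\rho\delta))}$ extended instances, one of which is $(\rho,\delta,\est)$-sparse with respect to the optimum $F^\star$ and also satisfies~\ref{item:sparse4:knapsack}. For the correct $\scr{J}=(\calF,\calC',c,r_j,w_i,W,F_0)$, I would solve the strengthened relaxation~\ref{lp:knap:discount}, which is feasible with objective $U$ bounded as in~\cref{lemma:knap:objective}, and then apply~\cref{lemma:facility:duplication} to obtain $\hat y$ and outer balls $\{F_j\}$ enjoying the five listed properties, most importantly the bounded star cost~\ref{item:star:cost}.

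Next, I would discretize the metric via $\hat c$ exactly as in~\cref{section:k:discount}, choose $b\in[0,1)$ so that the initial auxiliary objective is at most $\frac{\tau-1}{\ln\tau}U$ (mirroring~\cref{lemma:iterative:zero}), and adjoin at every $i\in F_0$ a virtual client of discount $0$ and initial radius level $-1$; because its level can never be decreased, the subroutine $\mathbf{update}\textbf{-}C^\star$ retains it in $C^\star$ throughout, guaranteeing $F_0\subseteq\bar F$ in the end. Running~\cref{algo:iterative} with step size $h=1$ returns an optimal basic $y'$ whose tight non-box constraints form a single laminar family plus the lone knapsack inequality, so at most $t\le2$ variables are strictly fractional. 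By~\cref{lemma:iterative:knap}, every $j\in\calC'$ admits a unit fractional volume of open facilities within distance $\frac{3\tau-1}{\tau-1}D_{\ell_j}$, so the fractional objective in~\eqref{eqn:final:objective1}, evaluated against the scaled discounts $\sigma r_j$ with $\sigma=\frac{\tau(3\tau-1)}{\tau-1}$, is at most $\frac{3\tau-1}{\ln\tau}U$.

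To integrally round $y'$ to $y^\star$, I would close the heavier of the two fractional facilities when $t=2$ (their $y'$-values sum to $1$, so the knapsack is preserved) and analogously the unique fractional one when $t=1$. The only new cost comes from clients in $J=\{j\in\calC':i_2\in B_j\}$, whose inner ball loses $i_2$; letting $i'\in\bar F$ be the nearest surviving facility to $i_2$ at distance $d$, I would split $J$ into $J_1=\{j:c_{ji_2}\ge\gamma d\}$ and $J_2=J\setminus J_1$. For $J_1$, the star-cost bound~\ref{item:star:cost} together with the triangle inequality gives~\eqref{eqn:reroute:1}. For $J_2$, combining~\cref{lemma:iterative:knap} with the triangle inequality forces $R_j\ge(1-\gamma)d/\sigma$; tuning $\gamma=\delta/(2\sigma+\delta)$ makes $2\gamma d\le\delta R_j$, placing $J_2$ inside the $\delta R_j$-ball around $j$ so that the definition of $R_j$ (which uses~\ref{item:sparse2:knapsack}) yields~\eqref{eqn:reroute:2}. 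Finally, each $j\in\calC\setminus\calC'$ is routed to its nearest facility in $F_0\subseteq\bar F$, contributing at most $5(\est-U)$ under discounts $\frac{1+\delta}{1-\delta}r_j=5r_j$ (with $\delta=2/3$) by~\cref{lemma:knap:objective}. Collecting all three pieces under the common discount multiplier $\alpha_\tau''=\max\{\sigma,3\sigma+2,5\}=3\sigma+2$ gives total cost at most $\bigl(\beta_\tau''+O(\rho+\epsilon)\bigr)\opt_2$ after using $\est\le(1+\epsilon)\opt_2$, and the stated running time comes from the enumeration over $(c_0,\est)$ and extended instances.

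I expect the main obstacle to be the rerouting analysis: one must simultaneously control three separate pieces of structure---the radius $R_j$ arising from sparsity~\ref{item:sparse2:knapsack}, the output guarantee of~\cref{lemma:iterative:knap}, and the bounded star cost~\ref{item:star:cost}---and tune $\gamma$ against $\delta$ so that the contributions of both $J_1$ and $J_2$ are $O(\rho\est)$ under a \emph{single uniform} scaled discount that also accommodates the $\calC\setminus\calC'$ piece. The remaining steps are essentially bookkeeping on top of the \kmd and \matmd templates already developed.
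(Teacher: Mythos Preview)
Your proposal is correct and follows the paper's own argument essentially step for step: the same enumeration of $(c_0,\est)$ and sparsified instances via \cref{theorem:sparsifying:knapsack}, the strengthened LP plus \cref{lemma:facility:duplication}, virtual clients anchoring $F_0$, iterative rounding with $h=1$ yielding at most two fractional coordinates, the $J_1/J_2$ split with the identical tuning $\gamma=\delta/(2\sigma+\delta)$ and $\delta=2/3$, and the final combination under the discount multiplier $3\sigma+2$. There is no substantive deviation from the paper's proof.
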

\section{Application}\label{section:application}

\subsection{Stochastic Center Clustering}

In this section, we consider the unassigned stochastic center clustering problems. The input contains a set of stochastic points $V$, and each of them follows a known distribution and independently realizes at a random location in $\calC$. For a set of open facilities $S\subseteq\calF$, the objective is the expectation of maximum distance from each realized point to its nearest open facility in $S$, that is, $\E_\tau[\max_{v\in V}c_{\tau(v)S}]$ where $\tau:V\rightarrow\calC$ is the random realization map such that $v\in V$ realizes at $\tau(v)\in\calC$.
In the cardinality version called unassigned stochastic $k$-center (\unkcen), the chosen facilities $S$ must satisfy $|S|\leq k$, In the matroid version called unassigned stochastic matroid center (\unmatcen), $S$ is required to be an independent set in a given matroid $\calM=(\calF,\calI)$. In the knapsack version called unassigned stochastic knapsack center (\unknapcen), $S$ needs to have a combined weight of at most $W$, i.e., $w(S)=\sum_{i\in S}w_i\leq W$. Guha and Munagala~\cite{guha2009exceeding} give a $(45+\epsilon)$-approximation algorithm for \unkcen.

Plugging the implicit bi-criteria $(5,5+\epsilon)$-approximation for \unikmd \cite{chakrabarty2019approximation} into the algorithm in Section~3.4 of \cite{guha2009exceeding}, one obtains an improved $(30+\epsilon)$-approximation for \unkcen.
Similarly, using our bi-criteria approximations for \matmd and \knapmd, we obtain the first $O(1)$-approximations for \unmatcen and \unknapcen. We consider \unmatcen for example here and give a proof sketch below.

We first need a few useful lemmas. The first one is due to Kleinberg~\etalcite{kleinberg2000allocating} and we prove a weaker version here for completeness~\cite{guha2009exceeding}.
\begin{lemma}\label{lemma:stochastic:1} (\cite{guha2009exceeding,kleinberg2000allocating}) Given $T>0$, $n$ independent Bernoulli variables $X_i,i\in[n]$ where $\E[X_i]=p_i$, and $n$ real numbers $s_i\geq T$, suppose $\E[\max_i s_iX_i]<T/3$, then one has $\sum_is_ip_i<T$.
\end{lemma}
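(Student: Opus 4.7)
The plan is to lower-bound $\E[\max_i s_iX_i]$ by two different quantities and combine them. First, observe that whenever at least one $X_i$ equals $1$, we have $\max_i s_iX_i \geq T$ by the assumption $s_i\geq T$; hence
\[
T \cdot \Pr\!\Bigl[\bigcup_i \{X_i = 1\}\Bigr] \;\leq\; \E[\max_i s_iX_i] \;<\; T/3,
\]
so the complement probability satisfies $\prod_i(1-p_i) = \Pr[\forall i:X_i=0] > 2/3$. In particular every $p_i < 1$, which I will use implicitly below.

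Second, I would consider the disjoint ``single success'' events $A_j = \{X_j=1\text{ and }X_i=0\ \forall i\neq j\}$, on which $\max_i s_iX_i$ equals exactly $s_j$. Since these events are disjoint and $\max_i s_iX_i \geq 0$ always, independence of the $X_i$ gives
\[
\E[\max_i s_iX_i] \;\geq\; \sum_{j}s_j\,\Pr[A_j] \;=\; \sum_{j}s_jp_j\prod_{i\neq j}(1-p_i).
\]

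Finally I would combine the two bounds. For every index $j$, dropping the single factor $(1-p_j)\leq 1$ from the product only decreases it, so $\prod_{i\neq j}(1-p_i) \geq \prod_i(1-p_i) > 2/3$. Plugging this into the previous display yields
\[
\tfrac{2}{3}\sum_j s_jp_j \;<\; \E[\max_i s_iX_i] \;<\; T/3,
\]
hence $\sum_j s_jp_j < T/2 < T$, which is the desired (weaker) conclusion.

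I do not anticipate any real obstacle: the only mildly delicate point is choosing the right decomposition of $\{\max_i s_iX_i>0\}$ into the $n$ disjoint single-success events $A_j$ (rather than attempting a union-bound, which would go the wrong way). Everything else is elementary manipulation of independent Bernoullis, and the statement we prove actually beats the claimed bound by a factor of two, consistent with the remark that this is a weaker form of the Kleinberg~\etal result.
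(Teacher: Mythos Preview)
Your proof is correct and follows essentially the same two-step skeleton as the paper: first bound the probability that some $X_i=1$, then lower-bound the expectation via the disjoint single-success events $A_j=\{X_j=1,\ X_i=0\ \forall i\neq j\}$. The only difference is in the intermediate estimates: the paper first proves $\sum_i p_i<2/3$ (via $1-\prod_i(1-p_i)\geq 1-e^{-\sum_i p_i}$) and then bounds $\Pr[A_j]\geq p_j(1-\sum_{i\neq j}p_i)\geq p_j/3$ using a union bound, whereas you work directly with the product $\prod_i(1-p_i)>2/3$ and drop a single factor to get $\Pr[A_j]>2p_j/3$. Your route is slightly more elementary (no exponential inequality, no union bound) and yields the sharper conclusion $\sum_i s_ip_i<T/2$; the paper's version only reaches $\sum_i s_ip_i<T$.
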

\begin{proof}
If $\sum_ip_i\geq2/3$, since the variables are independent, we have
\[
\E[\max_iX_i]=\Pr[\exists i,X_i=1]=1-\prod_i(1-p_i)\geq1-\exp\left(-\sum_ip_i\right)
\geq1-\exp(-2/3)>1/3,
\]
which leads to $\E[\max_is_iX_i]\geq T\E[\max_iX_i]>T/3$, contradicting the given condition. Thus we have $\sum_ip_i<2/3$.

For each $i$, let $\calE_i$ denote the event that $X_i=1$ and $X_j=0$ for each $j\neq i$. These events are mutually exclusive. Using the fact that the variables are independent, we have
\begin{align*}
\Pr[\calE_i]&=\Pr[X_i=1]\Pr[\forall j\neq i,X_j=0]=p_i\Pr\left[\sum_{j\neq i}X_j=0\right]\\
&=p_i\left(1-\Pr\left[\sum_{j\neq i}X_j\geq1\right]\right)
\geq p_i\left(1-\E\left[
\sum_{j\neq i}X_j
\right]\right)=p_i\left(1-\sum_{j\neq i}p_j\right)\geq p_i/3,
\end{align*}
where we use $\sum_ip_i<2/3$ in the last inequality. Since the events are mutually exclusive and $\calE_i$ implies $\max_is_iX_i=s_i$, the following inequalities shows the lemma.
\[
\frac{T}{3}>\E[\max_is_iX_i]\geq\E\left[\sum_is_i\mathbbm{1}[\calE_i]\right]\geq\sum_is_ip_i/3.\tag*{\qedhere}
\]
\end{proof}

We consider a slightly more general version of uniform \matmd here, where the clients are $\calC$ with uniform discounts $T$. Given any set of facilities $S$, the contribution of each $j\in\calC$ to the objective is $p_j(c_{jS}-T)^+$, where $p_j$ is the probability of there existing a point $v\in V$ realizing at $j$, i.e., $p_j=\Pr_\tau[\exists v\in V,\tau(v)=j]$ for the random realization map $\tau$. $\{p_j:j\in\calC\}$ is easy to compute and one notices that the algorithm in \cref{theorem:main:iterative:matroid} provides the same approximation guarantee by simply modifying the LP relaxations. For such a \matmd instance with uniform discounts $T\geq0$, we let $\opt_T$ denote the optimum. We also let $\opt^\star$ be the optimum to the original \unmatcen instance.

\begin{lemma}\label{lemma:stochastic:2} (\cite{guha1998greedy})
$\opt_0\geq\opt^\star\geq\frac{1}{3}\max\{T:\opt_T\geq T\}$.
\end{lemma}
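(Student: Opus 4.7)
\textbf{Proof plan for \cref{lemma:stochastic:2}.}

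The plan is to treat the two inequalities separately. For the first one, $\opt_0 \geq \opt^\star$, I would fix any matroid-independent $S \subseteq \calF$ and observe pointwise that for every realization $\tau$, $\max_{v \in V} c_{\tau(v)S} \leq \sum_{j \in \calC} \mathbbm{1}[\exists v, \tau(v)=j]\, c_{jS}$, since the max is dominated by the sum over all distinct realized locations. Taking expectation over $\tau$ and using $\Pr_\tau[\exists v, \tau(v)=j] = p_j$ yields $\E_\tau[\max_v c_{\tau(v)S}] \leq \sum_j p_j c_{jS}$, and minimizing both sides over $S$ independent in $\calM$ completes this direction.

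For the harder bound $\opt^\star \geq \tfrac{1}{3}\max\{T : \opt_T \geq T\}$, I would argue the contrapositive: assume $\opt^\star < T/3$ for some $T$, let $S$ be an optimal \unmatcen\ solution so that $\E_\tau[\max_v c_{\tau(v)S}] < T/3$, and show that the \emph{same} $S$ attains cost $< T$ in the \matmd\ instance with uniform discount $T$, forcing $\opt_T < T$. Starting from $\sum_j p_j (c_{jS} - T)^+ \leq \sum_{j:\, c_{jS}\geq T} p_j\, c_{jS}$ and using the union bound $p_j \leq \sum_v \Pr[\tau(v)=j]$, the right-hand side collapses to $\sum_v \E_\tau\!\left[c_{\tau(v)S}\,\mathbbm{1}[c_{\tau(v)S}\geq T]\right]$.

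The core of the argument is to apply \cref{lemma:stochastic:1}. For each $v \in V$, I would set $Z_v = \mathbbm{1}[c_{\tau(v)S}\geq T]$, $p_v = \Pr[Z_v=1]$, and $s_v = \E[c_{\tau(v)S}\mid Z_v=1] \geq T$. The $Z_v$ are independent Bernoullis (because the realizations $\tau(v)$ are independent), and the sum above equals $\sum_v p_v s_v$. \cref{lemma:stochastic:1} then gives $\sum_v p_v s_v < T$, provided $\E[\max_v s_v Z_v] < T/3$. To derive this from the \unmatcen\ bound, I would condition on the random set $V_1 = \{v : Z_v = 1\}$: given $V_1$, the variables $\{c_{\tau(v)S} : v \in V_1\}$ are independent with conditional means $s_v$, so
\[
\E\!\left[\max_{v \in V_1} c_{\tau(v)S} \,\Big|\, V_1\right] \geq \max_{v \in V_1} s_v = \max_v s_v Z_v,
\]
and taking outer expectation together with $\max_{v \in V_1} \leq \max_{v \in V}$ yields $\E_\tau[\max_v c_{\tau(v)S}] \geq \E[\max_v s_v Z_v]$. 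Chaining everything, $\sum_j p_j(c_{jS}-T)^+ < T$, contradicting $\opt_T \geq T$.

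The main obstacle is precisely this decoupling step: the Bernoulli indicator $Z_v$ is a deterministic function of $\tau(v)$, so one cannot feed $c_{\tau(v)S}$ directly into \cref{lemma:stochastic:1}. The conditioning-on-$V_1$ trick is what replaces the actual (random) far distances with their conditional means $s_v$ while preserving the inequality in the right direction. Everything else---the union bound on $p_j$, the truncation to $c_{jS} \geq T$, and the final chain of inequalities---is routine once this step is in place.
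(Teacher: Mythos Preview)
Your proof of $\opt_0\geq\opt^\star$ matches the paper's exactly. For the second inequality you take a genuinely different route. The paper works client-by-client: it introduces the truncated distance $\calL_T(j,S^\star)$ (equal to $c_{jS^\star}$ if $c_{jS^\star}\geq T$, else $0$), bounds $\E_\tau[\max_j \mathbbm{1}[\exists v,\tau(v)=j]\cdot\calL_T(j,S^\star)]<T/3$, and applies \cref{lemma:stochastic:1} directly with $X_j=\mathbbm{1}[\exists v,\tau(v)=j]$ and $s_j=c_{jS^\star}$ to conclude $\sum_{j:c_{jS^\star}\geq T}p_jc_{jS^\star}<T$ in one stroke, with no union bound and no conditioning step. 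The catch is that these occupancy indicators $X_j$ are \emph{not} independent (only negatively associated), so \cref{lemma:stochastic:1} as stated does not literally apply to them. Your approach sidesteps this issue by re-indexing over points $v\in V$ via the union bound $p_j\leq\sum_v\Pr[\tau(v)=j]$ and then invoking \cref{lemma:stochastic:1} on the genuinely independent Bernoullis $Z_v=\mathbbm{1}[c_{\tau(v)S}\geq T]$; the price you pay is the extra conditioning-on-$V_1$ step that replaces the random far distance by its conditional mean $s_v$, which is sound (it is just $\E[\max\,\cdot\,]\geq\max\E[\,\cdot\,]$). So your argument is a bit longer but is actually cleaner with respect to the independence hypothesis of \cref{lemma:stochastic:1}.
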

\begin{proof}
Let $S\in\calI$ be any feasible solution to the \unmatcen instance and thus 
\begin{align*}
\opt^\star&\leq\E_\tau\left[\max_{v\in V}c_{\tau(v)S}\right]
=\E_\tau\left[\max_{j\in\calC}\mathbbm{1}[\exists v,\tau(v)=j]\cdot c_{jS}\right]
\leq\E_\tau\left[\sum_{j\in\calC}\mathbbm{1}[\exists v,\tau(v)=j]\cdot c_{jS}\right]\\
&\leq\sum_{j\in\calC}\Pr_\tau[\exists v,\tau(v)=j]\cdot c_{jS}=\sum_{j\in\calC}p_jc_{jS}.
\end{align*}
For $T=0$, the minimum of the RHS over all feasible $S\in\calI$ is exactly $\opt_0$, thus the first inequality follows.
We now show if $\opt_T\geq T$, we have $\opt^\star\geq T/3$. For the sake of contradiction, suppose $\opt^\star<T/3$. To this end, let $\calL_T:(\calC\cup\calF)^2\rightarrow\mathbb{R}_{\geq0}$ be a new function such that $\calL_T(i,j)=c_{ij}$ if $c_{ij}\geq T$ and 0 otherwise. It is easy to see that $\calL_T(i,j)\geq(c_{ij}-T)^+$ always holds.

Let $S^\star$ be the optimal solution to the original \unmatcen instance, thus we obtain
\[
\frac{T}{3}>\opt^\star=\E_\tau\left[\max_{j\in\calC}\mathbbm{1}[\exists v,\tau(v)=j]\cdot c_{jS^\star}\right]
\geq\E_\tau\left[\max_{j\in\calC}\mathbbm{1}[\exists v,\tau(v)=j]\cdot\calL_T(j,S^\star)\right].
\]
We invoke \cref{lemma:stochastic:1} on clients $j$ such that $\calL_T(j,S^\star)=c_{jS^\star}\geq T$, and obtain 
\[\sum_{j:c_{jS^\star}\geq T}p_j\calL_T(j,S^\star)<T,\]
which leads to 
\[
T>\sum_{j:c_{jS^\star}\geq T}p_j\calL_T(j,S^\star)
=\sum_{j\in\calC}p_j\calL_T(j,S^\star)
\geq\sum_{j\in\calC}p_j(c_{jS^\star}-T)^+\geq\opt_T,
\]
contradicting $\opt_T\geq T$. Therefore, the initial assumption is false and we have $\opt^\star\geq T/3$.
\end{proof}

Following \cite{guha2009exceeding}, we repeatedly solve \matmd instances with a decreasing uniform discount.
We start with $T=\max_{i,j}c_{ij}$ and scale it by $T\leftarrow (1-\epsilon)T$ after each iteration. In each iteration, we use \cref{theorem:main:iterative:matroid} and approximately solve the \matmd instance with uniform discounts $T$ (and scaled by the probabilities). Recall that in the bi-criteria $(\atau',\btau')$-approximation for \matmd, we have $\atau'=\frac{\tau(3\tau-1)}{\tau-1}$ and $\btau'=\frac{3\tau-1}{\ln\tau}$.

Let $T^\star$ be the smallest value such that the output solution $\baring{S}$ obtains objective $\leq\btau'T^\star$ against discounts $\atau'T^\star$. Now, consider the performance of $\baring{S}$ in the \unmatcen instance, with objective
\begin{align*}
\E_\tau\left[\max_{v\in V}c_{\tau(v)\baring{S}}\right]
&\leq\atau'T^\star+\E_\tau\left[\max_{v\in V}(c_{\tau(v)\baring{S}}-\atau'T^\star)^+\right]\\
&\leq \atau'T^\star+\E_\tau\left[\sum_{j\in \calC}\mathbbm{1}[\exists v\in V,\tau(v)=j]\cdot(c_{j\baring{S}}-\atau'T^\star)^+\right]\\
&\leq \atau'T^\star+\sum_jp_j(c_{j\baring{S}}-\atau'T^\star)^+\leq \atau'T^\star+\btau'T^\star.
\end{align*}

On the other hand, because $T^\star$ is the smallest such value, when the uniform discounts are $T'=(1-\epsilon)T^\star$ in the next iteration, the obtained objective of \matmd is $>\btau'T'$, which shows that the optimum $\opt_{T'}$ of \matmd with uniform discounts $T'$ satisfies $\opt_{T'}>T'$, using~\cref{theorem:main:iterative:matroid}. According to \cref{lemma:stochastic:2}, this shows that $T'\leq3\opt^\star$, whence we conclude that $\baring{S}$ has objective (in the \unmatcen instance) at most $(\atau'+\btau')T^\star=(\atau'+\btau')T'/(1-\epsilon)\leq3(1+2\epsilon)(\atau'+\btau')\opt^\star$. Let $\tau=1.985$, and we obtain a $(51.638+\epsilon)$-approximation for \unmatcen.
A similar $(117.263+\epsilon)$-approximation algorithm can also be obtained in the knapsack case.


\subsection{Universal Matroid Median}

Following the recently-developed framework by Ganesh~\etalcite{ganesh2021universal}, we obtain bi-criteria constant-factor approximations for universal matroid median and its variant with fixed clients. We remark that the knapsack versions of these two problems remain interestingly open, as the unbounded integrality gap of the relaxation with knapsack constraints is hard to overcome and the current methods fail to provide any similar guarantee. We provide the algorithm for universal matroid median here, and defer the proof sketch for universal matroid median with fixed clients to \cref{section:universal:fixed}.

\begin{theorem}\label{theorem:matroid:universal}
There exists a bi-criteria $(O(1),O(1))$-approximation algorithm for the universal matroid median problem.
\end{theorem}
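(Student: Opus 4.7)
The plan is to follow the reduction framework of Ganesh~\etalcite{ganesh2021universal} for universal clustering, but replace their cardinality constraint throughout with matroid constraints, and plug in our bi-criteria approximation for \matmd (\cref{theorem:main:iterative:matroid}) in place of their bi-criteria approximation for \kmd. The high-level structure of Ganesh~\etalcite{ganesh2021universal} is: first guess the minimum regret $\minregret^\star$ over a polynomial number of candidate values; then for each client $j$, define a ``per-client regret budget'' $r_j$ that acts as a uniform-looking discount (it reflects the distance from $j$ to the nearest facility opened by the adversary's per-scenario optimum, after suitable discretization); then reduce the universal problem to a single instance of clustering with discounts whose objective upper bounds the actual regret against every realization $\calC'\subseteq\calC$.

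First, I would verify that Ganesh~\etal's reduction does not actually use the cardinality constraint except to express feasibility of the output and of the per-scenario optima. Replacing ``$|S|\leq k$'' by ``$S\in\calI$'' for the given matroid $\calM=(\calF,\calI)$ does not affect their combinatorial ``charging'' argument that compares the regret of the integral solution against the per-scenario optima: the argument is local at each client and only invokes the triangle inequality plus the definition of $r_j$. So the same reduction produces a \matmd instance whose optimum is $O(1)\cdot\minregret^\star$, using the standard matroid polytope in place of the cardinality constraint in the natural LP relaxation.

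Second, I would apply \cref{theorem:main:iterative:matroid} as a black box to this \matmd instance, obtaining a set $\baring S\in\calI$ and constants $\atau',\btau'=O(1)$ such that the $\atau'$-discounted sum is at most $\btau'$ times the \matmd optimum. Translating this back through the reduction of Ganesh~\etal, the first $O(1)$ factor becomes the multiplicative blow-up on the per-client regret budgets $r_j$ (hence on $\minregret^\star$ inside the $\max$), while the second $O(1)$ factor becomes the multiplicative approximation on the total regret. This yields a bi-criteria $(O(1),O(1))$-approximation for universal matroid median.

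The main obstacle I anticipate is checking that every step of the Ganesh~\etal analysis that is stated for cardinality-constrained clustering still goes through when the feasibility region is a matroid: in particular, that their construction of $r_j$ from an LP solution, and their charging of each realized client's contribution to the discount plus a rerouting term, only use local properties and the triangle inequality. Since our \matmd algorithm already handles the matroid polytope via iterative rounding with a single laminar family (using step size $h=1$), and since the reduction only needs the existence of some bi-criteria $(O(1),O(1))$-approximation for \matmd as a subroutine, no additional structural property of the LP is required, and \cref{theorem:matroid:universal} follows.
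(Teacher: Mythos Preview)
Your high-level plan---plug the \matmd approximation of \cref{theorem:main:iterative:matroid} into the universal-clustering framework of Ganesh~\etalcite{ganesh2021universal}---is exactly what the paper does. But your description of that framework is wrong, and the approach you sketch would not work as stated. There is no guessing of $\minregret$, and the discounts are not derived from any ``adversary's per-scenario optimum'' (which would depend on the scenario $C_1$ and hence not give a single $r_j$). The actual reduction writes an LP relaxation \ref{lp:universal:matroid} for the regret, solves it approximately via the ellipsoid method, and then sets $r_j=8\fav(j)$ where $\fav(j)=\sum_i\bar x_{ij}c_{ij}$ is the fractional connection cost. The factor $8$ is the integrality gap of the matroid-median LP~\cite{swamy2016improved}, which yields the sandwich bound $\tfrac{1}{8}\optimal(C_1)\le\sum_{j\in C_1}\fav(j)\le\optimal(C_1)+2\minregret$; the \matmd call then bounds $\sum_j(c_{j\bar S}-8\atau'\fav(j))^+$ against $\btau'\minregret$, and the two combine into the $(8\atau',16\atau'+\btau')$ guarantee.

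You also miss the one step where replacing ``$|S|\le k$'' by ``$S\in\calI$'' is genuinely not cosmetic. Separating the last constraint family of \ref{lp:universal:matroid} amounts to maximizing the monotone submodular function $f_{x,y}(S')=\sum_{j}(\sum_i x_{ij}c_{ij}-c_{jS'})^+$ over \emph{feasible} $S'$. In the cardinality setting this is submodular maximization under a cardinality constraint; here it becomes submodular maximization over a matroid, and the paper explicitly invokes the greedy $2$-approximation of~\cite{nemhauser1978analysis} to get an approximate separation oracle. Your claim that the constraint only appears ``to express feasibility of the output and of the per-scenario optima'' overlooks that this feasibility region is exactly the domain of the separation subproblem, and its structure determines which approximation guarantee is available there.
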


\subsubsection{Fractional Solution}
In this section, we consider the universal matroid median problem, where we are given clients $\calC$, facilities $\calF$, a matroid $\calM=(\calF,\calI)$ with rank function $r_\calM:2^\calF\rightarrow\mathbb{Z}_{\geq0}$, and the goal is find a set of facilities $S\in\calI$ with minimum regret, where the regret of $S$ is defined as the maximum difference between its objective value and the instance-optimum, over the choices of realized clients being an arbitrary $C_1\subseteq\calC$. More specifically, we define the regret of $S$
\begin{equation}
    \regret(S)\defeq\max_{C_1\subseteq\calC}\{\solution(C_1,S)-\optimal(C_1)\},\label{eqn:regret:definition}
\end{equation}
\begin{equation*}
    \textrm{where}\;\solution(C_1,S)\defeq\sum_{j\in C_1}c_{jS},\,
    \optimal(C_1)\defeq\min_{S\in\calI}\solution(C_1,S),\,
\end{equation*}
and we want to find $S\in\calI$ that attains the minimum $\minregret\defeq\min_{S\in\calI}\regret(S)$.

For universal $k$-median which has a weaker $k$-cardinality constraint on the set of facilities we can choose, Ganesh~\etalcite{ganesh2021universal} devise a bi-criteria $(O(1),O(1))$-approximation algorithm, that is, an algorithm that outputs $S\subseteq\calF,\,|S|\leq k$ s.t. $\solution(C_1,S)\leq\alpha\cdot\optimal(C_1)+\beta\cdot\minregret$ for each $C_1\subseteq\calC$.
They also show that even for universal $k$-median, one can only hope for bi-criteria $(\alpha,\beta)$-approximations where $\alpha,\beta$ are both bounded away from 1 for general metrics unless $\mathrm{P=NP}$. Therefore, we also aim at obtaining bi-criteria constant approximations for universal matroid median.

We first have the following relaxation with exponentially many constraints.
\begin{alignat}{3}
\text{min}\quad&&r&\geq0\tag{$\mathrm{M\text{-}UNI}$}\label{lp:universal:matroid}\\
\text{s.t.}\quad&&y(S)&\leq r_\calM(S)&&\forall S\subseteq\calF\notag\\
&& 0\leq x_{ij}&\leq y_i&&\forall i,j\notag\\
&&\sum_{i\in\calF}x_{ij}&=1&&\forall j\in\calC\notag\\
&&\sum_{j\in C_1}\sum_{i\in\calF}x_{ij}c_{ij}&\leq\optimal(C_1)+r\quad&&\forall C_1\subseteq\calC.\label{lp:universal:matroid:con4}
\end{alignat}

We notice that the minimum regret solution $\minregretsol\defeq\argmin_{S\in\calI}\regret(S)$ induces a feasible solution to \ref{lp:universal:matroid}, so its optimal objective is at most $\minregret$ by definition of $\minregret$ and constraint \eqref{lp:universal:matroid:con4}.
We want to use the ellipsoid algorithm to solve \ref{lp:universal:matroid}.
Unfortunately, we cannot find the value of any $\optimal(C_1)$ in polynomial time since this would imply a solution to standard $k$-median which is NP-hard, and we have to resort to approximate separating oracles. To separate a candidate solution $(x,y,r)$ to \ref{lp:universal:matroid}, it is equivalent to check whether the following holds,
\begin{equation}
    \max_{C_1\subseteq\calC}\max_{S'\in\calI}\left[
    \sum_{j\in C_1}\left(-c_{jS'}+\sum_{i\in\calF}x_{ij}c_{ij}\right)
    \right]\leq r,\label{eqn:oracle1}
\end{equation}
which, by defining 
\begin{equation}
f_{x,y}(S')\defeq
\sum_{j\in\calC}\left(-c_{jS'}+\sum_{i\in\calF}x_{ij}c_{ij}\right)^+,\label{eqn:fxy}
\end{equation}
is equivalent to deciding whether $\max_{S'\in\calI}f_{x,y}(S')\leq r$. Using the observation in~\cite{ganesh2021universal}, for each fixed $(x,y)$, $f_{x,y}:2^\calF\rightarrow\mathbb{R}_+$ is the sum of $|\calC|$ monotone submodular functions, hence monotone submodular itself. Thus a simple greedy algorithm reveals a 2-approximate solution to the problem of maximizing $f_{x,y}(S')$ subject to $S'\in\calI$, according to a classic result in~\cite{nemhauser1978analysis}. Further, using a standard result on ellipsoid methods~\cite{grotschel1981ellipsoid}, we can approximately solve \ref{lp:universal:matroid} and obtain a fractional solution $(\bar x,\bar y,\bar r)$ with actual regret $\max_{S'\in\calI}f_{\bar x,\bar y}(S')\leq2\minregret$. We let $\fav(j)=\sum_{i\in\calF}\bar x_{ij}c_{ij}$ be the cost of client $j$ in the fractional solution.

\subsubsection{Rounding and Analysis}

The standard relaxation for matroid median has an integrality gap of at most 8~\cite{swamy2016improved}, thus for the fractional solution, we have
\begin{equation}
    \forall C_1\subseteq\calC,\,\frac{1}{8}\optimal(C_1)\leq\sum_{j\in C_1}\fav(j)\leq\optimal(C_1)+2\minregret.\label{eqn:sandwich}
\end{equation}

We then solve an instance of \matmd, where $r_j=8\fav(j)$ for each $j\in\calC$. Using the aforementioned $(\atau',\btau')$-approximation in \cref{theorem:main:iterative:matroid}, we obtain a solution $\baring{S}\in\calI$ s.t.
\begin{align}
    \sum_{j\in\calC}\left(c_{j\baring{S}}-8\atau'\fav(j)\right)^+&\leq \btau'\sum_{j\in\calC}(m_j-8\fav(j))^+=\btau'\sum_{\substack{j\in\calC\\m_j\geq8\fav(j)}}(m_j-8\fav(j))\notag\\
    &\leq\btau'\left(\sum_{\substack{j\in\calC\\m_j\geq8\fav(j)}}m_j-\optimal(\{j\in\calC:m_j\geq8\fav(j)\})\right)\leq\btau'\minregret,\label{eqn:discount:regret}
\end{align}
where we use \eqref{eqn:sandwich} and $m_j\geq0$ is the connection distance of $j$ in the \emph{minimum regret solution}. This shows that for any $C_1\subseteq\calC$, the total cost of $\baring{S}$ is
\begin{align}
    \sum_{j\in C_1}c_{j\baring{S}}&\leq\sum_{j\in C_1}\left((c_{j\baring{S}}-8\atau'\fav(j))^++8\atau'\fav(j)\right)
    \leq\btau'\minregret+8\atau'\sum_{j\in C_1}\fav(j)\notag\\
    &\leq8\atau'(\optimal(C_1)+2\minregret)+\btau'\minregret
    =8\atau'\optimal(C_1)+(16\atau'+\btau')\minregret,\label{eqn:universal:bicriteria}
\end{align}
where we use \eqref{eqn:sandwich} and \eqref{eqn:discount:regret}. Recall that $\atau'=\frac{\tau(3\tau-1)}{\tau-1}$ and $\btau'=\frac{3\tau-1}{\ln\tau}$, therefore \eqref{eqn:universal:bicriteria} provides a trade-off between the approximation factors. For instance, one can choose $\tau=1.816$ and obtain ratios $<(79.192,165.839)$ minimizing the first factor, or $\tau=1.832$ and obtain ratios $<(79.199,165.824)$ minimizing the second factor.

\bibliographystyle{plainurl}
\bibliography{references.bib}

\appendix

\section{Universal Matroid Median with Fixed Clients}\label{section:universal:fixed}

\begin{theorem}\label{theorem:matroid:universal:fixed}
There exists a bi-criteria $(O(1),O(1))$-approximation algorithm for the universal matroid median with fixed clients problem.
\end{theorem}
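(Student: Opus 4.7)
The plan is to follow the template of \cref{theorem:matroid:universal} with suitable modifications for the fixed-clients setting. Let $F\subseteq\calC$ denote the set of fixed clients that are always realized; the adversary chooses an additional subset $C_1\subseteq\calC\setminus F$, so the realized client set is $F\cup C_1$, and we redefine
\[
\regret(S)\defeq\max_{C_1\subseteq\calC\setminus F}\left[\solution(F\cup C_1,S)-\optimal(F\cup C_1)\right].
\]

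First, I would write an analogous relaxation: minimize $r\geq0$ subject to the matroid polytope constraints $y(S)\leq r_\calM(S)$, the assignment constraints $\sum_{i}x_{ij}=1$, $0\leq x_{ij}\leq y_i$, and one regret constraint
$\sum_{j\in F\cup C_1}\sum_{i\in\calF}x_{ij}c_{ij}\leq\optimal(F\cup C_1)+r$
for each $C_1\subseteq\calC\setminus F$. The integer solution induced by $\minregretsol$ is feasible at $r=\minregret$, so the LP optimum is at most $\minregret$. As in the main text, $\optimal(F\cup C_1)$ is not known exactly, so I would solve the LP via the ellipsoid method with an approximate separation oracle and argue that the returned fractional solution $(\bar x,\bar y,\bar r)$ has worst-case regret $\bar r=O(\minregret)$.

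The main obstacle is designing that oracle. For a fixed candidate $(x,y,r)$, the worst-case constraint violation equals $\max_{S'\in\calI}\phi(S')$, where
\[
\phi(S')\defeq\sum_{j\in F}\left(\fav(j)-c_{jS'}\right)+\sum_{j\in\calC\setminus F}\left(\fav(j)-c_{jS'}\right)^+.
\]
Unlike the fixed-client-free case, $\phi$ is the sum of a supermodular term (from the fixed clients) and a monotone submodular term, so the greedy algorithm of \cite{nemhauser1978analysis} used in \cref{theorem:matroid:universal} does not apply directly. To circumvent this, observe that maximizing $\phi(S')$ is equivalent, up to an additive constant, to minimizing
$h(S')\defeq\sum_{j\in F}c_{jS'}+\sum_{j\in\calC\setminus F}\min(c_{jS'},\fav(j))$
over $S'\in\calI$, which is an instance of matroid median on $\calC$ with the distances of non-fixed clients capped at $\fav(j)$. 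Any $O(1)$-approximation for matroid median \cite{krishnaswamy2018constant,swamy2016improved} therefore yields an approximate separating hyperplane, and by the standard ellipsoid-with-approximate-oracle analysis \cite{grotschel1981ellipsoid} we recover $(\bar x,\bar y,\bar r)$ with $\max_{S'\in\calI}\phi(S')\leq O(\minregret)$.

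With $(\bar x,\bar y,\bar r)$ in hand, the rounding step mirrors \cref{theorem:matroid:universal}. I would set up a \matmd instance with discounts $r_j=8\fav(j)$ for every $j\in\calC$, invoke the $(\alpha_\tau',\beta_\tau')$-approximation of \cref{theorem:main:iterative:matroid} to obtain $\bar S\in\calI$, and bound the realized cost on any $F\cup C_1$ by re-running the sandwich inequality \eqref{eqn:sandwich} on the subset $F\cup C_1$, together with the discount-budget inequality \eqref{eqn:discount:regret} applied to the regret surrogate bound on $\phi$. Since fixed clients enter the final calculation only through the same linear-in-$\fav$ bound, the final bi-criteria guarantee takes the same qualitative form $(O(1),O(1))$, with constants of the same order as in \cref{theorem:matroid:universal} plus a constant blow-up from the approximate separation.
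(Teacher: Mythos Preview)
Your separation-oracle step has a genuine gap. You correctly observe that maximizing
\[
\phi(S')=\sum_{j\in F}(\fav(j)-c_{jS'})+\sum_{j\notin F}(\fav(j)-c_{jS'})^{+}
\]
is equivalent, up to the additive constant $C=\sum_{j\in\calC}\fav(j)$, to minimizing $h(S')$. But a multiplicative $\gamma$-approximation for $\min h$ does \emph{not} translate into a multiplicative approximation for $\max\phi$. If the oracle returns $\hat S$ with $h(\hat S)\le\gamma\,h^\star$ and $\phi(\hat S)\le r$, all you can conclude is
\[
\max_{S'}\phi(S')=C-h^\star\le C-\tfrac{1}{\gamma}h(\hat S)\le C\bigl(1-\tfrac{1}{\gamma}\bigr)+\tfrac{r}{\gamma},
\]
and $C$ is on the order of $\optimal(\calC)$, not $\minregret$. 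Thus the ellipsoid method with this oracle does not certify $\max\phi=O(\minregret)$; the ``standard ellipsoid-with-approximate-oracle'' argument you invoke needs the oracle to be $\alpha$-approximate in the sense that non-separation implies feasibility at $\alpha r$, and yours is not. (A secondary issue: the capped objective $h$ is not a matroid-median instance on a metric; it is a penalty/prize-collecting variant, so you would also need to cite an approximation for that, not \cite{krishnaswamy2018constant,swamy2016improved} directly.)

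The paper avoids this obstacle by decoupling the two pieces of $\phi$. It writes the separation condition as $f_{x,y}(S)>\solution(C_f,S)-\sum_{j\in C_f}\fav(j)+r$, where $f_{x,y}$ is the (monotone submodular) non-fixed part. It then enumerates a geometric grid of values $M$ for $\solution(C_f,S)$ and, for each $M$, greedily maximizes $f_{x,y}$ over the $1$-system of $M$-cheap bases, using a $\gamma$-approximation for \emph{incremental} matroid median to test membership. This yields a $(2\gamma(1+\epsilon),2)$-approximate fractional solution. In the rounding step the paper also sets $r_j=0$ for $j\in C_f$ (rather than $8\fav(j)$), which is what makes the bound $\sum_j(m_j-r_j)^+\le\minregret+8\sum_{j\in C_f}\fav(j)$ and the subsequent cancellation against $\sum_{j\in C_f}\fav(j)$ go through cleanly.
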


We closely follow the framework by Ganesh~\etalcite{ganesh2021universal} and give a proof sketch for~\cref{theorem:matroid:universal:fixed}. The formulation of universal matroid median with fixed clients is the same as universal matroid median, except that there exist fixed clients $C_f\subseteq\calC$ and each client realization must contain $C_f$.
In the sequel, we assume the existence of a deterministic $\gamma$-approximation for matroid median, for which we substitute the deterministic 8-approximation by Swamy~\cite{swamy2016improved} at the end. There then exists a $\gamma$-approximation for incremental matroid median, that is, given any subset of facilities in place, find the independent set containing all the given facilities which minimizes the median objective. The proof is the same as Theorem~24 in~\cite{ganesh2021universal} (the arXiv version), thus omitted here.
A similar LP relaxation is the following.
\begin{alignat}{3}
\text{min}\quad&&r&\geq0\tag{$\mathrm{M\text{-}LP}_{fix}$}\label{lp:universal:matroid:fixed}\\
\text{s.t.}\quad&&y(S)&\leq r_\calM(S)&&\forall S\subseteq\calF\notag\\
&& 0\leq x_{ij}&\leq y_i&&\forall i,j\notag\\
&&\sum_{i\in\calF}x_{ij}&=1&&\forall j\in\calC\notag\\
&&\sum_{j\in C_1}\sum_{i\in\calF}x_{ij}c_{ij}&\leq\optimal(C_1)+r\quad&&\forall C_1\subseteq\calC,\,C_1\supseteq C_f.\label{lp:universal:matroid:fixed:con4}
\end{alignat}

Likewise, in order to approximately solve \ref{lp:universal:matroid:fixed}, we need an approximate separation oracle on \eqref{lp:universal:matroid:fixed:con4}. By defining
\[
f_{x,y}(S')\defeq\sum_{j\in\calC\setminus C_f}\left(
-c_{jS'}+\sum_{i\in\calF}x_{ij}c_{ij}
\right)^+,
\]
separating \eqref{lp:universal:matroid:fixed:con4} is equivalent to determining whether there exists $S\in\calI$ s.t.
\begin{equation}
f_{x,y}(S)>\solution(C_f,S)-\sum_{j\in C_f}\sum_{i\in\calF}x_{ij}c_{ij}+r.\label{eqn:separation:knap}
\end{equation}
Though the LHS is still monotone submodular on $S\in\calI$, the RHS is changing with $S$, thus we need to consider \eqref{eqn:separation:knap} based on different values of $\solution(C_f,S)$. The basic idea is, for each possible value $M$, we only consider those $S\in\calI$ that satisfies $\solution(C_f,S)=M$, and maximize $f_{x,y}$ subject to $S\in\calI,\,\solution(C_f,S)=M$. If every such optimal $S_M$ satisfies $f_{x,y}(S_M)\leq M-\sum_{j\in C_f}\sum_{i\in\calF}x_{ij}c_{ij}+r$, $(x,y,r)$ is feasible to \ref{lp:universal:matroid:fixed}, otherwise it violates \eqref{lp:universal:matroid:fixed:con4} for some $C_1\subseteq\calC$ that attains $f_{x,y}(S_M)$ via $C_1\setminus C_f$, i.e., $C_1=C_f\cup\{j\in\calC\setminus C_f:\sum_{i\in\calF}x_{ij}c_{ij}\geq c_{jS_M}\}$.

To simplify the algorithm, we consider a reduced number of values $M\in\{\min\{(1+\epsilon)^s,\gamma(1+\epsilon)|C_f|\max_{i,j}c_{ij}\}:s\geq0\}\cup\{0\}$, and relax the constraint by trying to approximately maximize $f_{x,y}(S)$ subject to $\solution(C_f,S)\leq M,\,S\in\calI$. For a specific value of $M$ and the obtained solution $S_M\in\calI$, if $S_M$ satisfies $f_{x,y}(S_M)>M-\sum_{j\in C_f}\sum_{i\in\calF}x_{ij}c_{ij}+r$, it also satisfies \eqref{eqn:separation:knap} and thus violates \eqref{lp:universal:matroid:fixed:con4}, since $S_M$ by definition is $M$-cheap, i.e., it satisfies $\solution(C_f,S_M)\leq M$ (recall that we can efficiently compute $f_{x,y}(S_M)$ and $\solution(C_f,S_M)$). Otherwise, the algorithm assumes \eqref{eqn:separation:knap} is NOT satisfied by any $M$-cheap independent set, and moves on to check larger values of $M$.

Let $\calI_M$ be the set of \emph{bases} of $\calI$ satisfying $\solution(C_f,S)\leq M$. It follows that the maximum of $f_{x,y}$ over $M$-cheap independent sets must be attained in $\calI_M$, since $f_{x,y}$ is monotone submodular and $\solution(C_f,S)$ is non-increasing. Then maximizing $f_{x,y}$ is equivalent to maximizing $f_{x,y}$ on a 1-system, where the maximal independent sets are exactly $\calI_M$. We use the same greedy oracle in~\cite{ganesh2021universal}, where we replace their incremental $k$-median algorithm with the aforementioned $\gamma$-approximate algorithm for incremental matroid median. The same approximate separation oracle then outputs a $(2\gamma(1+\epsilon),2)$-approximate fractional solution $(\bar x,\bar y,\bar r)$ to universal matroid median with fixed clients in polynomial time, i.e.,\[\sum_{j\in C_1}\sum_{i\in\calF}\bar x_{ij}c_{ij}\leq2\gamma(1+\epsilon)\optimal(C_1)+2\minregret,\,\forall C_1\subseteq\calC,\,C_1\supseteq C_f.\]

We let $\fav(j)=\sum_{i\in\calF}\bar x_{ij}c_{ij}$ be the cost of client $j$ in the fractional solution.
We proceed to use our bi-criteria $(\atau',\btau')$-approximation for \matmd with discount values $r_j=8\fav(j)$ if $j\notin C_f$ and $r_j=0$ if $j\in C_f$, and we obtain
\begin{align*}
\minregret&=\max_{C_1\subseteq\calC,C_1\supseteq C_f}\left\{\sum_{j\in C_1}m_j-\optimal(C_1)\right\}\geq\max_{C_1\subseteq\calC,C_1\supseteq C_f}\left\{\sum_{j\in C_1}m_j-8\fav(j)\right\}\\
&=\sum_{j\in C_f}(m_j-8\fav(j))+\sum_{j\in\calC\setminus C_f}(m_j-8\fav(j))^+,
\end{align*}
using the small integrality bound of matroid median~\cite{swamy2016improved}, thus yielding
\[
\minregret+8\sum_{j\in C_f}\fav(j)\geq\sum_{j\in C_f}m_j+\sum_{j\in\calC\setminus C_f}(m_j-8\fav(j))^+=\sum_{j\in\calC}(m_j-r_j)^+.
\]

Our result then reveals a solution $\baring{S}\in\calI$ s.t.
\begin{align*}
&\sum_{j\in\calC}(c_{j\baring{S}}-\atau'r_j)^+\leq\btau'\sum_{j\in\calC}(m_j-r_j)^+\leq\btau'\left(\minregret+8\sum_{j\in C_f}\fav(j)\right)\\
\Rightarrow&\sum_{j\in C_f}(c_{j\baring{S}}-8\btau'\fav(j))+\sum_{j\in\calC\setminus C_f}(c_{j\baring{S}}-8\atau'\fav(j))^+\leq\btau'\minregret,
\end{align*}
therefore for each $C_1\subseteq\calC,\,C_1\supseteq C_f$, one has
\begin{align*}
\sum_{j\in C_1}c_{j\baring{S}}&\leq\sum_{j\in C_f}(c_{j\baring{S}}-8\btau'\fav(j))+\sum_{j\in C_1\setminus C_f}(c_{j\baring{S}}-8\atau'\fav(j))^++8\max\{\atau',\btau'\}\sum_{j\in C_1}\fav(j)\\
&\leq\btau'\minregret+8\max\{\atau',\btau'\}\left(2\gamma(1+\epsilon)\optimal(C_1)+2\minregret
\right),
\end{align*}
whence a bi-criteria $(O(1),O(1))$-approximation algorithm for universal matroid median with fixed clients follows.

\end{document}